\theoremstyle{plain}
\newtheorem{theorem}{Theorem}
\newtheorem{proposition}[theorem]{Proposition}
\newtheorem{lemma}[theorem]{Lemma}
\newtheorem{corollary}[theorem]{Corollary}
\theoremstyle{definition}
\newtheorem{definition}{Definition}
\theoremstyle{remark}
\newtheorem{remark}{Remark}[section]
\newcommand\bel{\begin{lemma}}
\newcommand\eel{\end{lemma}}
\newcommand\bep{\begin{proposition}}
\newcommand\eep{\end{proposition}}
\newcommand\bet{\begin{theorem}}
\newcommand\eet{\end{theorem}}
\newcommand\bex{\begin{example}}
\newcommand\eex{\end{example}}
\newcommand\bed{\begin{definition}}
\newcommand\eed{\end{definition}}
\newcommand\bea{\begin{assumption}}
\newcommand\eea{\end{assumption}}
\newcommand{\beq}{\begin{equation}}
\newcommand{\eeq}{\end{equation}}
\numberwithin{equation}{section}
\newcommand\lotimes{\mathop{\otimes}\limits}
\newcommand\loplus{\mathop{\oplus}\limits}
\def\bbbone{{\mathchoice {\rm 1\mskip-4mu l} {\rm 1\mskip-4mu l}
{\rm 1\mskip-4.5mu l} {\rm 1\mskip-5mu l}}}
\def\one{\bbbone}
\newcommand{\+}{{{+}{+}}}
\renewcommand{\t}{{\#}}
\renewcommand{\i}{\mathrm{i}}
\renewcommand{\d}{\mathrm{d}}
\newcommand{\e}{\mathrm{e}}
\newcommand{\h}{\mathrm{h}}
\newcommand{\p}{\mathrm{p}}
\newcommand{\reg}{\mathrm{reg}}
\newcommand{\cA}{\mathcal{A}}
\newcommand{\cS}{\mathcal{S}}
\newcommand{\pder}{\partial}
\newcommand{\qnd}{\mathrm{qnd}}
\renewcommand{\Re}{\mathrm{Re}\,}
\newcommand{\Tr}{\mathrm{Tr}\,}
\newcommand{\bbC}{\mathbb{C}}
\newcommand{\bbR}{\mathbb{R}}
\renewcommand{\bar}{\overline}
\newcommand{\Op}{\mathrm{Op}}
\newcommand{\Osc}{\mathrm{Osc}}
\newcommand{\Sym}{\mathrm{Sym}}
\newcommand{\nor}{\mathrm{nor}}
\newcounter{smallarabics}
\newenvironment{arabicenumerate}
{\begin{list}{{\normalfont\textrm{(\arabic{smallarabics})}}}
  {\usecounter{smallarabics}\setlength{\itemindent}{0cm}
   \setlength{\leftmargin}{5ex}\setlength{\labelwidth}{4ex}
   \setlength{\topsep}{0.75\parsep}\setlength{\partopsep}{0ex}
   \setlength{\itemsep}{0ex}}}
{\end{list}}
\newcounter{smallroman}
\newcommand{\ben}{\begin{arabicenumerate}}  
\newcommand{\een}{\end{arabicenumerate}}
\begin{document}

\title{Quantization of Gaussians}
\author{
  Jan Derezi\'{n}ski,\footnote{The financial support of the National Science
Center, Poland, under the grant UMO-2014/15/B/ST1/00126, is gratefully
acknowledged.}
  \hskip 3ex
  Maciej Karczmarczyk\footnotemark[\value{footnote}]
\\
Department of Mathematical Methods in Physics, Faculty of Physics\\
University of Warsaw,  Pasteura 5, 02-093, Warszawa, Poland\\
email: jan.derezinski@fuw.edu.pl\\
email: maciej.karczmarczyk@fuw.edu.pl}

\maketitle

\abstract{Our paper is devoted to the oscillator semigroup, which can be defined as the set of operators whose kernels are centered Gaussian. Equivalently, they can be defined as the the Weyl quantization of centered Gaussians. We use the Weyl symbol as the main parametrization of this semigroup. We derive  formulas for the tracial and operator norm of the Weyl quantization of Gaussians. We identify the subset of Gaussians, which we call {\em quantum degenerate}, where these  norms have a singularity.}

\vspace{0.5cm}
\textit{ Dedicated to the memory of Boris Pavlov.}

\section{Introduction}

Throughout our paper we will use the Weyl quantization, which is the most natural correspondence between quantum and classical states. For a function $a=a(x,p)$,
with  $x,p\in\bbR^d$, we will denote by $\Op(a)$ its Weyl quantization. Then function 
$a$ is called the Weyl symbol (or the Wigner function) of the operator $\Op(a)$.

The Heisenberg uncertainty relation says that one cannot  compress a state both in position and momentum without any limits. This is different than in classical mechanics, where in principle a state can have no dispersion both in position and momentum.

 One can ask what happens to a quantum state when we compress its Weyl symbol.
To be more precise, consider  the Gaussian function $\e^{-\lambda (x^2+p^2)}$, where $\lambda >0$ is an arbitrary parameter that controls the ``compression''. It is easy to compute the Weyl quantization of  $\e^{-\lambda (x^2+p^2)}$ and express it in terms  of the quantum harmonic oscillator
\beq H=\hat x^2+\hat p^2=\sum_{j=1}^d(\hat x_j^2+\hat p_j^2).\eeq
There are  3 distinct regimes of the parameter $\lambda $:
\beq
\Op\big(\e^{-\lambda (x^2+p^2)}\big)
=\begin{cases}(1-\lambda ^2)^{-d/2}\exp\Big(-\frac12\log\frac{(1+\lambda )}{(1-\lambda )}H\Big),&0<\lambda <1,\\
2^{-d}\one_{\{d\}}(H),&\lambda =1,\\
(\lambda ^2-1)^{-d/2}(-1)^{(H-d)/2}\exp\Big(-\frac12\log\frac{(1+\lambda )}{(\lambda -1)}H\Big),&
1<\lambda .
\end{cases}
\label{propo}\eeq
Thus, for $0<\lambda <1$, the quantization of the  Gaussian is proportional to  a thermal state of $H$. As $\lambda $ increases to $1$, it becomes ``less mixed''--its ``temperature'' decreases. At $\lambda =1$ it becomes  pure---its ``temperature'' becomes zero and it is the ground state of $H$. For $1<\lambda <\infty$, when we compress the Gaussian, it
 is no longer positive---due to the factor
 $(-1)^{(H-d)/2}$ it has eigenvalues with alternating signs.
Besides, it
becomes ``more and more mixed'', contrary to the naive classical picture.

Thus, at $\lambda =1$ we observe a kind of a ``phase transition'': For $0\leq\lambda <1$ the quantization of a Gaussian behaves more or less according to the classical intuition. For $1<\lambda$ the classical intuition stops to work---compressing the  classical symbol makes its quantization more ``diffuse''.

It is easy to compute the trace of (\ref{propo}):
\beq\Tr\Op\big(\e^{-\lambda (x^2+p^2)}\big)=\frac{1}{2^d\lambda ^{d}}.\label{propo1}\eeq
Evidently, (\ref{propo1}) does not see the ``phase transition'' at $\lambda =1$.
However, if we consider the trace norm, this phase transition appears---the trace norm of   (\ref{propo}) is differentiable except at  $\lambda =1$:
\beq\Tr\left|\Op\big(\e^{-\lambda (x^2+p^2)}\big)\right|=\begin{cases}
\frac{1}{2^d\lambda ^{d}}&\lambda \leq1,\\
\frac{1}{2^d},&1\leq \lambda 
.\end{cases}\label{propo2}\eeq
Note that (\ref{propo2}) can be viewed as a kind of quantitative ``uncertainty principle''.

Our paper is devoted to operators that can be written as the Weyl quantization of a (centered) Gaussian, more precisely, operators  of the form $a\Op(\e^{-A})$, where $A$ is a quadratic form with a strictly positive real part and $a\in\bbC$. Such operators form a semigroup called the \emph{oscillator semigroup}. We denote it by $\Osc_\+(\bbC^{2d})$. We also considered its subsemigroup, called the
\emph{normalized oscillator semigroup} and denoted $\Osc_\+^\nor(\bbC^{2d})$, which consists of operators
$\pm\sqrt{\det(\one+A\theta)}\Op(\e^{-A})$, where $\theta$ is $-\i$ times the symplectic form $\omega$.

The oscillator semigroup are closely related to the complex symplectic group $Sp(\bbC^{2d})$.  In particular, there exists a natural 2-1 epimorphism from 
$\Osc_\+^\nor(\bbC^{2d})$ onto $Sp_\+(\bbC^{2d})$, which is a certain natural subsemigroup of $Sp(\bbC^{2d})$.

The oscillator semigroup is closely related to the better known \emph{metaplectic group}, denoted $Mp(\bbR^{2d})$. The metaplectic group is generated by operators of the form $\pm\sqrt{\det(\one+ B\omega)}\Op(\e^{\i B})$, where $B$ is a real symmetric matrix. There exists a natural 2-1 epimorphism from $Mp(\bbR^{2d})$ to the real symplectic group $Sp(\bbR^{2d})$. 
Not all elements of the metaplectic group can be written as Weyl quantizations of a Gaussian.

The situation with the oscillator semigroup is somewhat different than with the metaplectic group. All elements of the oscillator semigroup are quantizations of a Gaussian, however, not all of them correspond to a (complex) symplectic transformation. Those, that do not, correspond to  quadratic forms $A$  satisfying $\det(\one+A\theta)=0$. 
We call such quadratic forms ``quantum degenerate''. Classically, they are of course nondegenerate.  Only their quantization is degenerate.
In particular, for a quantum degenerate $A$, the operator $\Op(\e^{-A})$  is not proportional to an element of  $\Osc_\+^\nor(\bbC^{2d})$. The set of quantum degenerate matrices can be viewed as a place where  some kind of a phase transition takes place in the oscillator semigroup.
For instance, as we show in our paper, the trace norm of $\Op(\e^{-A})$ depends smoothly on quantum nondegenerate $A$'s, however its smoothness typically breaks down 
at quantum degenerate $A$'s.

It is also natural to mention another type of an oscillator semigroup, which we denote $\Osc_+(\bbC^{2d})$. It is the semigroup generated by the operators of the form $a\Op(\e^{-A})$, where $A\geq0$. $\Osc_+(\bbC^{2d})$ contains both $\Osc_\+(\bbC^{2d})$ and $Mp(\bbR^{2d})$. It is in some sense the closure of $\Osc_\+(\bbC^{2d})$. We mention this semigroup only in passing, concentrating on $\Osc_\+(\bbC^{2d})$, which is easier, because, as we mentioned above, all elements of
 $\Osc_\+(\bbC^{2d})$ have Gaussian symbols. Note that the convenient notation $\+$ for $>0$ and $+$ for $\geq0$, which we use,  is borrowed from Howe \cite{Ho}.

Most of the time our discussion of the oscillator semigroup is
representation independent (without invoking a concrete Hilbert space on which $\Op(\e^{-A})$ acts).  
Perhaps the most obvious representation is the so-called Schr\"odinger representation, where the Hilbert space is $L^2(\bbR^d)$,  $\hat x$
is identified with the operator of multiplication by $x$ and $\hat p$ is $\frac{1}{\i}\partial_x$. Another possible representation is the Fock representation (or, which is essentially equivalent, the Bargmann-Fock representation, see e.g. \cite{DG}).
In both Schr\"odinger and Bargmann-Fock representations
the oscillator semigroup consists of operators with centered Gaussian kernels.

Let us now discuss the literature on operators with Gaussian kernels, or equivalently, on quantizations of Gaussians. Probably, the best known reference on this subject is a paper \cite{Ho} by Howe. In fact, we follow to some extent the  terminology from \cite{Ho}. His paper contains, for instance, a formula of composition of operators with Gaussian kernels, a criterion for positivity of such operators and the proof that there exists a 2-1 epimorphism from the normalized oscillator semigroup to a subsemigroup of $Sp(\bbC^{2d})$. 
Howe works mostly in the Schr\"odinger representation. Instead of the Weyl symbol, he  occasionally considers the so-called {\em Weyl transform}, which is essentially the Fourier transform of the Weyl symbol.

Another important work on the subject is a paper \cite{Hi} by Hilgert, who realised that the oscillator semigroup is isomorphic to a semigroup described by Bargmann, Brunet and Kramer (see \cite{Ba}, \cite{Br} and \cite{BrK}).
Hilgert uses mostly the {\em Fock-Bargmann  representation}.

The book  of Folland \cite{F}
contains a chapter on the oscillator semigroup, which sums up  the main points of \cite{Hi} and \cite{Ho}.
  
The existence of the ``phase transition'' at quantum degenerate positive Gaussians has been known for quite a long time, where the earliest reference we could find  is the paper \cite{U} by Unterberger.

Our paper  differs from \cite{Ho,Hi,F}
by using the Weyl quantization as the basic tool for the description of the oscillator semigroup.
It  is in some sense parallel to the presentation of the metaplectic group contained in Sect. 10.3 of \cite{DG}.
The Weyl quantization is in our opinion a natural tool in this context. First of all, it is symplectically invariant (unlike the Fock-Bargmann transform or the Schr\"odinger representation). Because of that, the analysis based on the Weyl quantization is particularly convenient and yields simple formulas. Secondly, the Weyl quantization allows us to make a direct contact with the quantum--classical correspondence principle.
  (This semiclassical aspect is hidden when one uses the Weyl transform, which is also symplectically invariant).

An operation, that we introduce, which we find interesting is the product $\t$ in the set of symmetric matrices. More precisely, it is defined so that
$\Op(\e^{-A})\Op(\e^{-B})$ is proportional to $\Op(-\e^{A\t B})$. Whenever defined, $\t$ is associative, however it is not always well defined.
$\t$ can be viewed as a semiclassical noncommutative distortion of the 
usual sum of square matrices. As we show,  quantum nondegenerate matrices with a positive part form a semigroup, which is essentially isomorphic to the oscillator semigroup.

Among new results obtained in our paper is a formula for the absolute value of and operator $\Op(\e^{-A})$, its trace norm and its operator norm.

There exists a close relationship between the set of complex matrices
equipped with $\#$ and the complex symplectic group. This relationship is quite intricate--it is almost a bijection, after removing some exceptional elements in both sets. One of new results of our paper is a detailed description of this relationship, see in particular Thm \ref{bijections-th}.

An interesting recent paper of Viola \cite{V} gives a formula for the norm of an element of the oscillator semigroup. 
Our formula for  $\big\|\Op(\e^{-A})\big\|$ is in our opinion  simpler than Viola's.

As an application of the formula for the trace norm of  $\Op(\e^{-A})$ we give a proof of the boundedness of the Weyl quantization with an explicit estimate of the of the operator norm. This result, which is a version of the so-called Calderon-Vaillancourt Theorem for the Weyl quantization, follows the ideas of Cordes \cite{C} and Kato \cite{K}, however the estimate of the norm seems to be new.

Elements of the oscillator semigroup can be viewed as exponentials of quantum quadratic Hamiltonians, that is $\e^{-\Op(H)}$, where $H$ is a classical quadratic Hamiltonian with a positive real part.  One example of such a Hamiltonian is
$\hat H_\psi:=\e^{\i\psi}\hat p^2+\e^{-\i\psi}\hat x^2$ for $|\psi|<\frac{\pi}{2}$, which is often called the {\em Davies harmonic oscillator}. It has been noted by a number of authors that this  operator has interesting, often counterintuitive  properties.
In particular, \cite{AV} and \cite{V} point out that $\e^{-z\hat H_\psi}$ can be defined as a bounded operator only for $z$ that  belong to a subset of the complex plane of a rather curious shape. We reproduce this result using methods developed in this article.

The  oscillator semigroup provides a natural framework for
a discussion of holomorphic semigroups $z\mapsto\e^{-z\Op(H)}$ associated with accretive quadratic Hamiltonians $\Op(H)$. We briefly discuss this issue at the end of our paper.

        \section{Notation}

        Let $L(\bbC^n)$ denote the set of $n\times n$ matrices.
        For $R\in L(\bbC^n)$ we will write $\bar R$, $R^\t$, resp. $R^*$
        for its complex conjugate,  transpose, resp.  Hermitian adjoint.
Elements of $\bbC^n$ are represented by column matrices, so that for $v,w\in\bbC^n$ the (sesquilinear) scalar product of $v$ and $w$ can be denoted by $v^*w$.

        $\sigma(R)$ will denote the spectrum of $R$.

We set
        \beq L^\reg(\bbC^n):=\{R\in L(\bbC^n)\ |\ R+\one \hbox{ is invertible }\}.\eeq
        For $R\in  L^\reg(\bbC^n)$, its  \emph{Cayley transform} is defined by
        \[c(R):=(\one-R)(\one+R)^{-1}.\]
        The Cayley transform is a bijection on 
$L^\reg(\bbC^n)$ and it is involutive, i.e. \beq c(c(R))=R.\eeq

For $A\in L(\bbC^n)$, we write $A>0$, resp. $A\geq0$ if
        \begin{align}
          v^* Av&>0,\quad v\in\bbC^n, \quad v\neq0,\\
          \text{resp.}\quad
          v^* Av&\geq0,\quad v\in\bbC^n.
        \end{align}

        $\Sym(\bbR^n)$, resp.        $\Sym(\bbC^n)$ denotes the set of symmetric real, resp. complex $n\times n$ matrices. We also set
        \begin{align} \Sym_+(\bbR^n)&:=\{A\in \Sym(\bbR^n)\mid A\geq0\},\\
          \Sym_\+(\bbR^n)&:=\{A\in \Sym(\bbR^n)\mid A>0\},\\
           \Sym_+(\bbC^n)&:=\{A\in \Sym(\bbC^n)\mid \Re A\geq0\},\\
          \Sym_\+(\bbC^n)&:=\{A\in \Sym(\bbC^n)\mid \Re A>0\}.
\end{align}
     		Note that $\Sym_\+(\bbC^n)$ is sometimes called the (generalised) \emph{Siegel upper half-plane}. It is sometimes denoted $S_n$ or $\mathfrak{S}_n$ \cite{Ho}. 

                The following proposition can be found in \cite{Ho}:
                
        \bep If  $A\in\Sym_\+(\bbC^n)$, then $A^{-1}$ exists
        and belongs to $\Sym_\+(\bbC^n)$.\eep

        \proof Let $A=A_{\rm r}+\i A_{\rm i}$ with
$A_{\rm r}\in \Sym_\+(\bbR^n)$, $A_{\rm i}\in \Sym(\bbR^n)$.
Let $B:=\sqrt{A_{\rm r}}$, $C:=B^{-1}A_{\rm i}B^{-1}$. Then $A=B(\one+\i C)B$ and
        $
        A^{-1}=B^{-1}(\one+\i C)^{-1}B^{-1}.$
        Clearly, $(\one+\i C)^{-1}\in \Sym_\+(\bbC^n)$. Hence
        $A^{-1}\in \Sym_\+(\bbC^n)$. \qed
        

        Every $n\times n$ symmetric matrix $A$ defines
    a quadratic form  on $\bbR^n$  by
        \beq
\bbR^n\ni y\mapsto y^\t Ay\in\bbC.\label{funio}\eeq
        We will often write $A$ for the function (\ref{funio}). Thus, in particular,
        \[\e^{-A}(y)=\e^{-y^\t Ay}.\]

We will often need to use the square root of a complex number $a$. If it is clear from the context that $a$ is positive and real, then
$\sqrt{a}$ will always denote the positive square root. If $a$ is a priori arbitrary, then $\pm\sqrt{a}$ will denote both values of the square root. If a given formula involves only one of possible values of the square root, then we will write \emph{$\epsilon\sqrt{a}$ where $\epsilon=1$ or $\epsilon=-1$.}

        \section{The Weyl quantization}
 Recall that 	for any $a\in\mathscr{S}'(\bbR^d\times\bbR^d)$
	\begin{equation}
		\Op(a)(x,\,y) = (2\pi)^{-d}\int a\Big(\frac{x+y}{2},\,p\Big)\e^{\i p(x-y)}\,\mathrm{d}p
	\end{equation}
is called the \emph{Weyl--Wigner quantization of the symbol $a$}, see e.g.  Section 18.5 of \cite{H3} or \cite{DG}. We can recover the symbol of a quantization from its distributional kernel by
	\begin{equation}\label{symbol-operatora-eq}
		a(x,\,p) = \int \Op(a)\Big(x + \frac{z}{2},\, x - \frac{z}{2}\Big)\,\e^{-\i zp}\,\mathrm{d} z.
		\end{equation}
		
For sufficiently nice functions $a,b$ we can define the \emph{star product} $*$ (sometimes called the \emph{Moyal star}) such that $\Op(a)\Op(b) = \Op(a*b)$ holds. On the level of symbols we have 
	\begin{equation}
		(a*b)(x,\,p) :=  \e^{\frac{\i}{2}(\pder_{x_1}\pder_{p_2}-\pder_{p_1}\pder_{x_2})}a(x_1,\, p_1)b(x_2,\, p_2) \Big|_{\begin{subarray}{l}x:=x_1=x_2 \\ p:=p_1=p_2    \end{subarray}}.
	\label{for}\end{equation}
	Write $y=\begin{bmatrix}x\\p\end{bmatrix}$, $\omega: = 
	\begin{bmatrix} 0 & \one_d \\ -\one_d & 0  
	\end{bmatrix},
        $ and $
	\theta: = \begin{bmatrix} 0 & -\i\one_d \\ \i\one_d & 0  
	\end{bmatrix}=-\i\omega.$
        One can rewrite (\ref{for}) in a more compact form:
        \begin{equation}
		(a*b)(y) =  \e^{-\frac{1}{2}\pder_{y_1}\theta\pder_{y_2}}a(y_1)b(y_2) \Big|_{y:=y_1=y_2}.
	\label{for1}\end{equation}
Here is an integral form of (\ref{for1}):
	\begin{equation}\label{moyal-integral}
	(a*b)(y) = \pi^{-2d}\int\mathrm{d}y_1\int\mathrm{d}y_2\,\e^{2(y-y_1)\theta(y-y_2)}a(y_1)b(y_2),
	\end{equation}
         (see e.g. \cite{DG}, Theorem 8.70.(4)).
        For the product of 3 symbols we have
        \begin{align}
          (a*b*c)(y)&=           \e^{-\frac{1}{2}\pder_{y_1}\theta\pder_{y_2}-\frac{1}{2}\pder_{y_1}\theta\pder_{y_3}-\frac{1}{2}\pder_{y_2}\theta\pder_{y_3}}a(y_1)b(y_2)c(y_3) \Big|_{y:=y_1=y_2=y_3}\\
          &\hspace{-9ex}=          \pi^{-3d}\int\mathrm{d}y_1\int\mathrm{d}y_2\int\mathrm{d}y_3\,
          \e^{(y-y_1)\theta(y-y_2)+(y-y_2)\theta(y-y_3)+(y-y_1)\theta(y-y_3)}
          \label{compi}
          \\
              &\hspace{-7ex}
              \times
\e^{-\frac12(y-y_1)\theta(y-y_1)-\frac12(y-y_2)\theta(y-y_2)-\frac12(y-y_3)\theta(y-y_3)} a(y_1)b(y_2)c(y_3)\Big|_{y=y_1=y_2=y_3}.\notag
          \end{align}

\section{Product $\t$}

Let $A,B\in\Sym(\bbC^{2d})$.
Suppose that
\beq
\text{the matrix }   \begin{bmatrix}\theta A\theta&-\theta\\
            \theta&\theta B\theta\end{bmatrix}\text{ is invertible.}\label{warunek}\eeq
We then  define $A\t B
\in\Sym(\bbC^{2d})$ by
\beq A\t B:=
\begin{bmatrix}-\one\\\one\end{bmatrix}^{\t}         \begin{bmatrix}\theta A\theta&-\theta\\
            \theta&\theta B\theta\end{bmatrix}^{-1} \begin{bmatrix}-\one\\\one\end{bmatrix}
.\label{oper}\eeq

            For the time being, the definition of the  product $\t$ may seem strange. As we will soon see in Section \ref{Oscillator semigroup}, it is motivated by the product of operators with Gaussian symbols.

            The following proposition gives a condition which guarantees that $A\t B$ is well defined.

            \bep (\ref{warunek}) holds
            iff the inverse of $(\one+A\theta B\theta)$ exists. We then have
            \begin{eqnarray}
        \begin{bmatrix}\theta A\theta&-\theta\\
            \theta&\theta B\theta\end{bmatrix}^{-1}
            &=&
            \begin{bmatrix}
              (\theta A\theta+ B^{-1})^{-1}&(\theta+\theta B\theta A\theta)^{-1}
              \notag\\
-(\theta+\theta A\theta B\theta)^{-1}&
              (\theta B\theta+ A^{-1})^{-1}
            \end{bmatrix}\\
            &=&\begin{bmatrix}
              B\theta(\one+A\theta B\theta)^{-1}\theta&
               (\one+B\theta A\theta)^{-1}\theta\\              
              -(\one+A\theta B\theta)^{-1}\theta&
             A\theta(\one+B\theta A\theta)^{-1}\theta
            \end{bmatrix},\label{invo1}\\
A\t B&=&             (\theta A\theta+ B^{-1})^{-1}+(\theta B\theta+ A^{-1})^{-1}\nonumber\\&&+
  (\theta+\theta A\theta B\theta)^{-1}
  -(\theta+\theta B\theta A\theta)^{-1}.\label{pro0}
            \end{eqnarray}
            \label{tag4}
            \eep

            \proof
It is well known how to compute an inverse of a $2\times2$
block matrix. This yields (\ref{invo1}), which implies  (\ref{pro0}).

            Clearly,
            \beq
            \theta(\one+A\theta B\theta)^\t\theta=(\one+B\theta A\theta).
            \eeq
            Therefore, the inverse of $(\one+A\theta B\theta)$ exists iff the inverse of $(\one+B\theta A\theta)$ exists. If this is the case, then all terms in   (\ref{invo1}) and (\ref{pro0}) are well defined. \qed

            \begin{proposition}
              The product $\t$ is associative. More precisely, if $A,B,C\in\Sym(\bbC^{2d})$ and $A\t B$, $B\t C$, $(A\t B)\t C$ and $A\t(B\t C)$ are well defined, then
              \beq (A\t B)\t C = A\t(B\t C).\label{form2}
              \eeq
              Besides,
              \begin{align}
                A\t0=0\t A=A,&&A\t(-A)=0,\\
                \bar{A\t B}=\bar{B}\t\bar{A},&& (-A)\t(-B)=-B\t A.
                \end{align}
\end{proposition}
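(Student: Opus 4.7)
The plan has two parts: dispatch the four simple identities using the explicit formula from Proposition \ref{tag4}, then tackle associativity separately.

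For the four identities, formula (\ref{pro0}) is the workhorse. For $A\t 0 = 0\t A = A$: substitute $B=0$ (respectively $A=0$) directly into the block matrix in (\ref{oper}) and invert explicitly using $\theta^{-1}=\theta$; the bracket $\begin{bmatrix}-\one & \one\end{bmatrix}(\cdot)\begin{bmatrix}-\one\\\one\end{bmatrix}$ collapses to $A$. For $A\t(-A)=0$: substituting $B=-A$ in (\ref{pro0}) makes the first two summands $(\theta A\theta - A^{-1})^{-1}$ and $(-\theta A\theta + A^{-1})^{-1}$ negatives of each other, while the last two become identical with opposite signs, so all four cancel. For $(-A)\t(-B) = -B\t A$: substitute $A\mapsto -A$, $B\mapsto -B$ into (\ref{pro0}); the first two summands each acquire an overall sign, while the third and fourth swap roles compared to the formula for $B\t A$, yielding exactly $-B\t A$. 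For $\overline{A\t B} = \bar B \t \bar A$: use $\bar\theta = -\theta$, conjugate each summand of (\ref{pro0}) term by term, and rearrange to recognize the formula for $\bar B\t\bar A$.

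The genuinely harder statement is associativity. One algebraic route is to introduce a $3$-ary product $A\t B\t C$ defined via an appropriate $3\times 3$ block matrix incorporating $A,B,C$ on the diagonal and $\pm\theta$ in tridiagonal off-diagonal positions, and then to verify via Schur complement computations that this product equals both $(A\t B)\t C$ (partitioning the $3\times 3$ matrix as $\{1,2\}|\{3\}$) and $A\t(B\t C)$ (partitioning as $\{1\}|\{2,3\}$). The two iterated bracket evaluations agree because the inverse of the $3\times 3$ matrix is unique. The key step, and the main technical obstacle, is to identify the right $3\times 3$ matrix so that its Schur complement with respect to the lower-right block reproduces, up to predictable corrections absorbed into the subsequent bracket, the binary matrix defining $A\t B$; once this reduction is established, the rest is a routine but careful exercise in block algebra built on $\theta^2=\one$ and $\theta^{-1}=\theta$. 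The well-definedness hypotheses on $(A\t B)\t C$ and $A\t(B\t C)$ translate precisely into invertibility of the relevant Schur complements, ensuring all block inverses exist along the way.

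A conceptually cleaner alternative, which I would prefer if it is available at this point, is to defer to the motivation for $\t$: once it is shown in the subsequent section that $\Op(\e^{-A})\Op(\e^{-B})$ is proportional to $\Op(\e^{-A\t B})$ whenever $A\t B$ is defined, associativity of $\t$ reduces instantly to associativity of operator composition, with the hypotheses guaranteeing that each iterated product yields an operator with a Gaussian Weyl symbol. This route bypasses the block-matrix bookkeeping entirely at the cost of a forward reference.
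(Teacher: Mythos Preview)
Your approach to the four simple identities is correct and matches the paper's dismissal of them as ``straightforward.'' For associativity, your first route---a $3\times 3$ block matrix whose two different Schur-complement reductions yield the two iterated products---is exactly the paper's method. However, your guess that the off-diagonal structure is tridiagonal is wrong: the correct matrix, which the paper writes down explicitly, has entries in \emph{all} nine blocks,
\[
\begin{bmatrix}\theta A\theta+\tfrac12\theta&-\tfrac12\theta&-\tfrac12\theta\\
    \tfrac12\theta&\theta B\theta+\tfrac12\theta&-\tfrac12\theta\\
\tfrac12\theta&\tfrac12\theta&\theta C\theta+\tfrac12\theta
\end{bmatrix},
\]
sandwiched between $[-\one,\,0,\,\one]^{\t}$ and its transpose. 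The full off-diagonal coupling and the diagonal shifts by $\tfrac12\theta$ are dictated by the triple Moyal product formula (\ref{compi}), which you did not invoke; that formula is what tells you the right matrix without guesswork. Once you have it, the Schur-complement verification you outline goes through.

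Your second route, deducing associativity from associativity of operator composition via the forthcoming identity $\Op(\e^{-A})\Op(\e^{-B}) \propto \Op(\e^{-A\t B})$, is logically sound but has a subtlety: that identity is proved in the paper only for $A,B\in\Sym_\+(\bbC^{2d})$, whereas the present proposition is stated for arbitrary $A,B,C\in\Sym(\bbC^{2d})$ subject only to the algebraic well-definedness conditions. So the forward reference would give associativity on the open set $\Sym_\+$, and you would still need an analytic-continuation or density argument to extend to the general case. The paper's direct block-matrix computation avoids this issue.
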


            \proof We check that
            \begin{align}
            &(A\t B)\t C = A\t(B\t C)\\[2ex]
=& \begin{bmatrix}-\one\\0\\\one\end{bmatrix}^{\t}         \begin{bmatrix}\theta A\theta+\frac12\theta&-\frac12\theta&-\frac12\theta\\
    \frac12\theta&\theta B\theta+\frac12\theta&-\frac12\theta\\
\frac12\theta&\frac12\theta&\theta C\theta+\frac12\theta
  \end{bmatrix}^{-1} \begin{bmatrix}-\one\\0\\\one\end{bmatrix}
           .\end{align}
            (Compare with (\ref{compi})).
            This yields (\ref{form2}). The remaining statements are straightforward. \qed
            
            Note that it is useful to think of $\t$ as a noncommutative deformation of the addition. In fact, we have
            \beq A\t B= A+B+O(A^2+B^2).\eeq

\section{Quantum non-degenerate matrices}            

 Define
 \begin{eqnarray}
   \Sym^{\qnd}(\bbC^{2d})&:=&\{A\in
   \Sym(\bbC^{2d})\ :\ \det(\one+A\theta)\ne0\},\\
      \Sym_\+^{\qnd}(\bbC^{2d})&:=&\{A\in
      \Sym_\+(\bbC^{2d})\ :\ \det(\one+A\theta)\ne0\},\\
         \Sym^{\qnd}(\bbR^{2d})&:=&\{A\in
         \Sym(\bbR^{2d})\ :\ \det(\one+A\theta)\ne0\},\\
            \Sym_\+^{\qnd}(\bbR^{2d})&:=&\{A\in
 \Sym_\+(\bbR^{2d})\ :\ \det(\one+A\theta)\ne0\}.\end{eqnarray}
 ($\qnd$ stands for \emph{quantum non-degenerate}).

 There are several equivalent formulas for the product  (\ref{oper}). It is actually not so obvious to pass from one of them to another. In the following proposition  we give a few of them. 
            \bep Let $A,B\in \Sym^{\qnd}(\bbC^{2d})$ such that
            $(\one+A\theta B\theta)^{-1}$ exists. Then
\begin{eqnarray}
  A\t  B&=&
c\big(c(A\theta)c(B\theta)\big)\theta
\label{produ}\\
&=&
  (\one+A\theta)^{-1}(A\theta+B\theta)(\one+A\theta B\theta)^{-1}(\one+A\theta)\theta\label{pro1}\\
  &=&
  (\one+B\theta)(\one+A\theta B\theta)^{-1}(A\theta+B\theta)(\one+B\theta)^{-1}\theta\label{pro2}\\
    &=&
(\one-A\theta)(\one+B\theta A\theta)^{-1}(A\theta+B\theta)(\one-A\theta)^{-1}\theta\label{pro3a}\\
&=&
  (\one-B\theta)^{-1}(A\theta+B\theta)(\one+B\theta A\theta)^{-1}(\one-B\theta)\theta.\label{pro4}
\end{eqnarray}
We have
\beq
\one+A\theta B\theta=(\one+A\theta)(\one+A\t B\theta)^{-1}(\one +B\theta),
\label{prio}\eeq
and 
$A\t B\in \Sym^\qnd(\bbC^{2d})$.
\eep

\proof
To see (\ref{produ}), it is enough to show that
\beq c(A\t B\theta)=c(A\theta)c(B\theta).\eeq
 (\ref{pro0})  can be rewritten as
\begin{eqnarray*}
  A\t B&=&B\theta(\one+A\theta B\theta)^{-1}\theta+A\theta(\one+B\theta A\theta)^{-1}\theta\\
  &&+(\one+A\theta B\theta)^{-1}\theta-(\one+B\theta A\theta)^{-1}\theta\\
  &=&(\one+B\theta)(\one+A\theta B\theta)^{-1}\theta
  -(\one-A\theta)(\one+B\theta A\theta)^{-1}\theta.
\end{eqnarray*}
Therefore,
\begin{eqnarray}
  \one-A\t B\theta&=&(A\theta-\one)B\theta(\one+A\theta B\theta)^{-1}+
  (\one-A\theta)(\one+B\theta A\theta)^{-1}\notag\\
  &=&
  (\one-A\theta)(\one+B\theta A\theta)^{-1}(\one-B\theta);\label{tag1}
 \\
  \one+A\t B\theta&=&(\one+B\theta)A\theta(\one+B\theta A\theta)^{-1}+
  (\one+B\theta)(\one+A\theta B\theta)^{-1}\notag\\
  &=&
  (\one+B\theta)(\one+A\theta B\theta)^{-1}(\one+A\theta).\label{tag2}
  \end{eqnarray}
Hence,
\begin{eqnarray*}
&&  c(A\t B\theta)\\&=&(\one-A\theta)(\one+B\theta A\theta)^{-1}(\one-B\theta)(\one+A\theta)^{-1}
  (A\theta B\theta+\one)(\one+B\theta)^{-1}\\
  &=&(\one-A\theta)(\one+B\theta A\theta)^{-1}(\one-B\theta)\big(B\theta+(\one+A\theta)^{-1}(\one-B\theta)\big)
  (\one+B\theta)^{-1}\\
   &=&(\one-A\theta)(\one+B\theta A\theta)^{-1}\big(B\theta+(\one-B\theta)(\one+A\theta)^{-1}\big)
  (\one-B\theta)
  (\one+B\theta)^{-1}\\
     &=&(\one-A\theta)(\one+B\theta A\theta)^{-1}(\one+B\theta A\theta)(\one+A\theta)^{-1}
  (\one-B\theta)
  (\one+B\theta)^{-1}\\
  &=&c(A\theta)c(B\theta).
\end{eqnarray*}
Thus  (\ref{produ}) is proven.

Next note that
\begin{align}
  c(A\theta)c(B\theta)&=(\one +A\theta)^{-1}(\one-A\theta)(\one-B\theta)  (\one+B\theta)^{-1}\\
  &=(\one +A\theta)^{-1}(\one-A\theta-B\theta+A\theta B\theta)  (\one+B\theta)^{-1}.\end{align}
Therefore,
\begin{align}
\one-  c(A\theta)c(B\theta)&=2(\one +A\theta)^{-1}(A\theta+B\theta)  (\one+B\theta)^{-1}\label{inser1}\\
\one+  c(A\theta)c(B\theta)
&=2(\one +A\theta)^{-1}(\one+A\theta B\theta)(\one+B\theta)^{-1}.\label{inser2}\end{align}
Next we insert (\ref{inser1}) and (\ref{inser2}) into
\begin{eqnarray}
A\t B\ =\   c\big(c(A\theta)c(B\theta)\big)\theta&=&
\big(\one- c(A\theta)c(B\theta)\big)\big(\one+ c(A\theta)c(B\theta)\big)^{-1}
\theta
  \label{pro5}\\
  &=&\big(\one+c(A\theta)c(B\theta)\big)^{-1}\big(\one- c(A\theta)c(B\theta)\big)\theta.\label{pro6}
  \end{eqnarray}
obtaining (\ref{pro1}), resp. (\ref{pro2}).

We know that $A\t B$ is symmetric. Applying the transposition to (\ref{pro1}), resp. (\ref{pro2}) we obtain  (\ref{pro3a}), resp. (\ref{pro4}), where we use $\theta^\t=-\theta$, $A^\t=A$, $B^\t=B$.

(\ref{prio}) is proven in (\ref{tag2}). This implies that $\one + A\t B\theta$ is invertible. Hence $A\t B\in  \Sym^\qnd(\bbC^{2d})$.
\qed

$\Sym^{\qnd}(\bbC^{2d})$  equipped with
(\ref{oper}) is not a semigroup. It is enough to see that for $A=B=\begin{bmatrix}\i&0\\0&\i\end{bmatrix}$ we have $\one+A\theta B\theta=0$, so $A\t B$ is not defined.
  
            \begin{proposition}	
            $\Sym_\+(\bbC^{2d})$ is a semigroup.
\label{tag3}\end{proposition}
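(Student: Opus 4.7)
The plan is to verify, for $A,B\in\Sym_\+(\bbC^{2d})$, two properties: (i) the block matrix in (\ref{warunek}) is invertible, so that $A\t B$ is defined; and (ii) $\Re(A\t B)>0$. Once (i) and (ii) are in hand, closure gives that $A\t B,B\t C,(A\t B)\t C,A\t(B\t C)$ are all defined for $A,B,C\in\Sym_\+(\bbC^{2d})$, so associativity follows from the preceding proposition. Symmetry of $A\t B$ is already built into its defining formula and needs no new argument.

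The key observation I would exploit is that, for $A,B\in\Sym_\+(\bbC^{2d})$, the $4d\times 4d$ block matrix
\[
M:=\begin{bmatrix}\theta A\theta & -\theta \\ \theta & \theta B\theta\end{bmatrix}
\]
itself has strictly positive Hermitian part. This hinges on $\theta^{*}=\theta$ together with the sign asymmetry of the off-diagonal blocks: those of $M^{*}$ are the negatives of those of $M$, and they cancel when one symmetrizes. Explicitly,
\[
\tfrac12(M+M^{*})=\begin{bmatrix}\theta(\Re A)\theta & 0 \\ 0 & \theta(\Re B)\theta\end{bmatrix},
\]
which is strictly positive since $\theta$ is invertible and $\Re A,\Re B>0$ (for symmetric $A$ the entrywise real part $\Re A=\tfrac12(A+\bar A)$ coincides with the Hermitian part $\tfrac12(A+A^{*})$). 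Positivity of $\Re M$ at once gives invertibility of $M$: if $Mv=0$ then $v^{*}(\Re M)v=\Re(v^{*}Mv)=0$ forces $v=0$. Hence $A\t B$ is well-defined.

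For (ii), I would read off from (\ref{oper}), using that the outer matrix $\begin{bmatrix}-\one\\ \one\end{bmatrix}$ is real, the identity
\[
v^{*}(A\t B)v = u^{*}M^{-1}u, \qquad u:=\begin{bmatrix}-v\\ v\end{bmatrix},\quad v\in\bbC^{2d}.
\]
Applied with the already established $\Re M>0$, the standard inversion identity $\Re(M^{-1})=M^{-1}(\Re M)(M^{*})^{-1}$ shows $\Re(M^{-1})>0$. Since $u\neq0$ whenever $v\neq0$, one concludes $\Re\bigl(v^{*}(A\t B)v\bigr)>0$, i.e.\ $A\t B\in\Sym_\+(\bbC^{2d})$. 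The only step requiring a modicum of care is the sign bookkeeping that produces the block-diagonal form of $\Re M$; once that cancellation is noticed, the rest is purely formal.
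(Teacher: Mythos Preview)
Your proof is correct and follows essentially the same route as the paper: both arguments observe that the block matrix $M$ is symmetric with $\Re M=\begin{bmatrix}\theta(\Re A)\theta&0\\0&\theta(\Re B)\theta\end{bmatrix}>0$, so $M\in\Sym_\+(\bbC^{4d})$, then pass to $M^{-1}\in\Sym_\+(\bbC^{4d})$ and read off $\Re(A\t B)>0$ from the real sandwich in (\ref{oper}). The only cosmetic difference is that the paper invokes its earlier Proposition~1 (closure of $\Sym_\+$ under inversion) for the step $\Re(M^{-1})>0$, whereas you reprove it in-line via the identity $\Re(M^{-1})=M^{-1}(\Re M)(M^{*})^{-1}$.
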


              	\begin{proof}	Let $A$ and $B$ belong to $\Sym_\+(\bbC^{2d})$.
 The matrix $          \begin{bmatrix}\theta A\theta&-\theta\\
   \theta&\theta B\theta\end{bmatrix}$ belongs to $\Sym_\+(\bbC^{2d})$. Hence, so does its inverse. Therefore, (\ref{oper}) also belongs
   to $\Sym_\+(\bbC^{2d})$. This shows that
 $A\t B$ is well defined and belongs to $ \Sym_\+(\bbC^{2d})$.
            \end{proof}

            \begin{proposition}
              $\Sym_\+^\qnd(\bbC^{2d})$ is also a semigroup.
            \end{proposition}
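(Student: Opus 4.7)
The plan is to combine the two immediately preceding propositions. Since $A, B \in \Sym_\+^\qnd(\bbC^{2d})$ in particular lie in $\Sym_\+(\bbC^{2d})$, Proposition \ref{tag3} gives at once that $A\t B$ is well-defined (condition (\ref{warunek}) is satisfied) and that $A\t B \in \Sym_\+(\bbC^{2d})$. It therefore remains only to check that $A\t B$ is quantum non-degenerate, i.e.\ that $\one+(A\t B)\theta$ is invertible.

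For this I would invoke Proposition \ref{tag4} together with the product-formula proposition just above. Since (\ref{warunek}) holds, Proposition \ref{tag4} guarantees that $(\one+A\theta B\theta)^{-1}$ exists, which is precisely the hypothesis of the product-formula proposition. That proposition already contains the conclusion $A\t B \in \Sym^\qnd(\bbC^{2d})$; concretely it is encoded in identity (\ref{prio}),
\begin{equation*}
\one+A\theta B\theta = (\one+A\theta)(\one+A\t B\theta)^{-1}(\one+B\theta),
\end{equation*}
which may be rearranged into
\begin{equation*}
(\one+A\t B\theta)^{-1} = (\one+A\theta)^{-1}(\one+A\theta B\theta)(\one+B\theta)^{-1}.
\end{equation*}
The right-hand side is well-defined precisely because $A, B \in \Sym^\qnd(\bbC^{2d})$ (making $(\one+A\theta)$ and $(\one+B\theta)$ invertible) and because $(\one+A\theta B\theta)$ is invertible (by the previous paragraph). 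This exhibits $(\one+(A\t B)\theta)$ as invertible, hence $A\t B\in\Sym^\qnd(\bbC^{2d})$.

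Combining, $A\t B\in\Sym_\+(\bbC^{2d})\cap\Sym^\qnd(\bbC^{2d})=\Sym_\+^\qnd(\bbC^{2d})$, which is exactly the closure property required to make $\Sym_\+^\qnd(\bbC^{2d})$ a semigroup under $\t$. There is no genuine obstacle here: the real content — positivity preservation, invertibility of the relevant block matrix, and the factorization (\ref{prio}) — was installed in the preceding propositions, and this statement is essentially a corollary assembling them. The only point worth flagging is that one should not try to verify $\det(\one+(A\t B)\theta)\ne 0$ by a direct spectral argument (such a computation would be unpleasant given the implicit definition of $A\t B$); the clean route is through the factorization (\ref{prio}), where invertibility of $A\t B\theta+\one$ follows automatically from invertibility of the three factors on the right.
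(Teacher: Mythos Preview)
Your proof is correct and follows essentially the same route as the paper: use Proposition~\ref{tag3} to get $A\t B\in\Sym_\+(\bbC^{2d})$ (and hence well-definedness, so $\one+A\theta B\theta$ is invertible by Proposition~\ref{tag4}), then invoke the factorization~(\ref{prio}) together with the invertibility of $\one+A\theta$ and $\one+B\theta$ to conclude $\one+(A\t B)\theta$ is invertible. One small stylistic remark: your ``rearranged'' form of~(\ref{prio}) reads a bit circularly since~(\ref{prio}) as written already contains $(\one+A\t B\theta)^{-1}$; it is cleaner to cite the equivalent identity~(\ref{tag2}), $\one+A\t B\theta=(\one+B\theta)(\one+A\theta B\theta)^{-1}(\one+A\theta)$, which directly exhibits $\one+A\t B\theta$ as a product of invertible matrices.
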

            
            \begin{proof}	Let $A$ and $B$ belong to $\Sym_\+^\qnd(\bbC^{2d})$. We already know that $A\t B$ is well defined, and hence $\one+A\theta B\theta$ is invertible (see  Prop. \ref{tag4}).
                Using (\ref{prio}) and the invertibility of $\one+A\theta$, $\one+B\theta$,  we can conclude that $\one+A\t B\theta$ is invertible. Hence $A\t B\in \Sym_\+^\qnd(\bbC^{2d})$.                
\end{proof}
              
\section{Oscillator semigroup}
\label{Oscillator semigroup}

       Following \cite{Ho,F}, the {\em oscillator semigroup} $\Osc_\+(\bbC^{2d})$ is defined as the set of operators on $L^2(\bbR^d)$
        whose Weyl symbols are centered Gaussian, that is operators of the form
        $a\Op(\e^{-A})$, where $a\in\bbC$, $A\in\Sym_\+(\bbC^{2d})$ and
        $A(x,p)=\begin{bmatrix}x\\p\end{bmatrix}^\t A
        \begin{bmatrix}x\\p\end{bmatrix}$.
(In \cite{Ho}, this semigroup is  denoted $\Omega$).

There are several equivalent characterizations of $\Osc_\+(\bbC^{2d})$. Here is one of them:

          \bep $\Osc_\+(\bbC^{2d})$ equals the set of operators on $L^2(\bbR^d)$ with centered Gaussian kernels. More precisely, the integral kernel of
          $a\Op(\e^{-A})$ for $A = \begin{bmatrix} B & D \\ D^\t & F\end{bmatrix}$ is
          \[c\e^{-C(x,y)},\]
          where
          \beq c=\frac{2^{-d}a}{\sqrt{\det(F)}},\eeq  
          \begingroup\makeatletter\def\f@size{9}\check@mathfonts
          
           \begin{eqnarray*}C(x,y)&=& -\frac{1}{4}\begin{bmatrix}x \\ y\end{bmatrix}^{\t}\begin{bmatrix}1 & 1 \\ 1 & -1\end{bmatrix}\begin{bmatrix}B-D F^{-1}D^\t &  - \i DF^{-1} \\-\i F^{-1}D^\t & F^{-1}\end{bmatrix}\begin{bmatrix}1 & 1 \\ 1 & -1\end{bmatrix}\begin{bmatrix}x \\ y\end{bmatrix}\\
          &\hspace{-18ex}=& \hspace{-10ex}-\frac{1}{4}\begin{bmatrix}x \\ y\end{bmatrix}^{\t}\begin{bmatrix}B-D F^{-1}D^\t-\i DF^{-1} -\i F^{-1} D^\t +F^{-1} &   B-D F^{-1}D^\t+\i DF^{-1} -\i F^{-1} D^\t -F^{-1} \\ B-D F^{-1}D^\t-\i DF^{-1} +\i F^{-1} D^\t -F^{-1} & B-D F^{-1}D^\t+\i DF^{-1} +\i F^{-1} D^\t +F^{-1}\end{bmatrix}\begin{bmatrix}x \\ y\end{bmatrix} .\end{eqnarray*}\endgroup
          \eep

          \medskip

\proof The formula follows by elementary Gaussian integration. The detailed computations can be found in \cite{F}. \qed

            \begin{proposition}		Let $A$ and $B$ belong to $\Sym_\+(\bbC^{2d})$.
              Then the following product formula holds:
		\begin{equation}\label{star-product-formula}
		  \Op(\e^{-A})\Op(\e^{-B}) = \frac{\epsilon}{\sqrt{\det(A\theta B\theta +\one)}}\Op(\e^{-A\t B}),
		\end{equation}
                where
                $\epsilon=1$ or $\epsilon=-1$.
                Consequently, $\Osc_\+(\bbC^{2d})$ is a semigroup
                and
                \beq
                \Osc_\+(\bbC^{2d})\ni c\Op(\e^{-A})\mapsto A\in\Sym_\+(\bbC^{2d})
                \eeq
                is an epimorphism.
	\end{proposition}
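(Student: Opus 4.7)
The plan is to compute $\e^{-A}*\e^{-B}$ directly from the integral representation of the Moyal product (\ref{moyal-integral}) and then read off the operator identity via $\Op(a)\Op(b)=\Op(a*b)$. Substituting $a(y_1)=\e^{-y_1^\t A y_1}$ and $b(y_2)=\e^{-y_2^\t B y_2}$, and expanding
\[
2(y-y_1)\theta(y-y_2)=2y^\t\theta(y_1-y_2)+2y_1^\t\theta y_2
\]
using $y^\t\theta y=0$, the total exponent becomes a quadratic form in $Y=\begin{bmatrix}y_1\\ y_2\end{bmatrix}\in\bbC^{4d}$ of the type $-Y^\t M Y+V(y)^\t Y$, where
\[
M=\begin{bmatrix}A & -\theta\\ \theta & B\end{bmatrix},\qquad V(y)=2\begin{bmatrix}-\theta y\\ \theta y\end{bmatrix}.
\]

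Because $\Re A,\Re B>0$ and the off-diagonal blocks $\pm\theta$ are purely imaginary, $\Re M>0$, so the standard Gaussian integration formula applies. The factor $\pi^{2d}/\sqrt{\det M}$ produced by the integral cancels the $\pi^{-2d}$ prefactor in (\ref{moyal-integral}), leaving
\[
\e^{-A}*\e^{-B}=\frac{\epsilon}{\sqrt{\det M}}\exp\Big(\tfrac14 V(y)^\t M^{-1}V(y)\Big),
\]
for some $\epsilon\in\{\pm1\}$ determined by the branch of the square root.

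Two identifications remain. Setting $\Theta:=\mathrm{diag}(\theta,\theta)$, one has $\Theta^2=\one$ (since $\theta^2=\one$), and $\Theta M\Theta$ is precisely the block matrix appearing in the definition (\ref{oper}) of $A\t B$, while $V(y)=2\Theta\begin{bmatrix}-\one\\ \one\end{bmatrix}y$; substituting into the definition then yields $\tfrac14 V^\t M^{-1}V=-y^\t(A\t B)y$. For the determinant, apply the Schur complement to $\Theta M\Theta$ relative to the block $\theta B\theta$, simplify $\theta(\theta B\theta)^{-1}\theta=B^{-1}$ using $\theta^2=\one$, and use the cyclic identity $\det(\one+XY)=\det(\one+YX)$; this gives $\det M=\det(\one+A\theta B\theta)$. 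Combining the pieces produces (\ref{star-product-formula}).

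The main bookkeeping obstacle is tracking $\theta^\t=-\theta$ through the transposes in these two identifications; the sign of the square root is harmless because the statement only claims $\epsilon=\pm 1$. For the semigroup and epimorphism conclusions, Proposition \ref{tag3} guarantees $A\t B\in\Sym_\+(\bbC^{2d})$, so $\Op(\e^{-A})\Op(\e^{-B})$ lies in $\Osc_\+(\bbC^{2d})$ (closure under scalar prefactors is immediate), while the map $c\Op(\e^{-A})\mapsto A$ is well defined because $\e^{-A}$ is the (unique) Weyl symbol of $\Op(\e^{-A})$ and $A\in\Sym_\+(\bbC^{2d})$ is recovered from $\e^{-A}$; the product formula then shows this map intertwines operator composition with $\t$, and surjectivity is tautological.
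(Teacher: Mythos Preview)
Your proof is correct and follows essentially the same route as the paper: both compute $\e^{-A}*\e^{-B}$ via the integral form (\ref{moyal-integral}) of the Moyal product, perform the Gaussian integral over $Y=\begin{bmatrix}y_1\\ y_2\end{bmatrix}$ with the same block matrix $M=\begin{bmatrix}A&-\theta\\ \theta&B\end{bmatrix}$, and then identify the resulting exponent with $-y^\t(A\t B)y$ via the conjugation by $\Theta=\mathrm{diag}(\theta,\theta)$ and the determinant with $\det(\one+A\theta B\theta)$. Your write-up in fact supplies a bit more detail (the Schur complement and the cyclic determinant identity) than the paper, which simply asserts the two identifications.
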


            \begin{proof}
              Formula \eqref{moyal-integral} assures us that
		\begin{eqnarray}\label{star-pomoc}
                  &&			(\e^{-y^\t A y}*\e^{-y^\t By})(y)\\&=& \pi^{-2d}\int\mathrm{d}y_1\int\mathrm{d}y_2 \exp\big(-2(y-y_2)\theta(y-y_1)-y_1^\t A y_1-y_2^\t By_2\big)\\&=&
                  \pi^{-2d}\int\mathrm{d}y_1\int\mathrm{d}y_2 \exp\left(-\begin{bmatrix}
       y_1\\y_2\end{bmatrix}^\t\begin{bmatrix}A&-\theta\\\theta&B\end{bmatrix}
                  \begin{bmatrix}
       y_1\\y_2\end{bmatrix}-2\begin{bmatrix}
       y_1\\y_2\end{bmatrix}^\t\begin{bmatrix}
       -\theta y\\\theta y\end{bmatrix}\right)
       \nonumber\\
       &=&
       \det\begin{bmatrix}A&-\theta\\\theta&B\end{bmatrix}^{-1/2}
       \exp\left(
      \begin{bmatrix}
        -\theta y\\\theta y\end{bmatrix}^\t \begin{bmatrix}A&
          -\theta\\\theta&B\end{bmatrix}^{-1}
         \begin{bmatrix}
       -\theta y\\\theta y\end{bmatrix}\right)
             .   \end{eqnarray}
		Then we check that
                \beq
                \det\begin{bmatrix}A&-\theta\\\theta&B\end{bmatrix}
                =\det(\one+A\theta B\theta),\eeq
                \beq
                     \begin{bmatrix}
 -\theta y\\\theta y\end{bmatrix}^\t \begin{bmatrix}A&-\theta\\\theta&B\end{bmatrix}^{-1}
    \begin{bmatrix}
           -\theta y\\\theta y\end{bmatrix}
           =
-\begin{bmatrix}
       -y\\ y\end{bmatrix}^\t     
           \begin{bmatrix}\theta A\theta&-\theta\\
\theta&\theta B\theta\end{bmatrix}^{-1}\begin{bmatrix}-y\\y\end{bmatrix}
 .\eeq
        \end{proof}

Again, following \cite{Ho,F}, we introduce
the \emph{normalized oscillator semigroup}, denoted $\Osc_\+^\nor(\bbC^{2d})$, as 
    \[\left\{\pm\sqrt{\det(\one+A\theta)}\Op(\e^{-A}) \, | \, A\in\Sym_\+^\qnd(\bbC^{2d})\right\}.\]
(In \cite{Ho}, this semigroup is  denoted $\Omega^0$).

\bep $\Osc_\+^\nor(\bbC^{2d})$ is a subsemigroup of $\Osc_\+(\bbC^{2d})$
and                \beq
\Osc_\+^\nor(\bbC^{2d})\ni
\pm\sqrt{\det(\one+A\theta)}\Op(\e^{-A})\mapsto A\in\Sym_\+^\qnd(\bbC^{2d})
               \label{epi2} \eeq
                is a 2-1 epimorphism of semigroups.
                \label{semig}
	        \eep

\proof
It is enough to check that
	\begin{eqnarray}
	  &&\sqrt{\det(\one + A\theta)}\Op(\e^{-A})\sqrt{\det(\one + B\theta)}\Op(\e^{-B})\\&=&
          \epsilon\sqrt{\det(\one + A\t B\theta)}\Op(\e^{-A\t B}),
	\end{eqnarray}
        where $\epsilon=1$ or $\epsilon=-1$.
        Indeed, (\ref{prio}) implies
        \beq
\det(\one+A\theta B\theta)=\det(\one+A\theta)\det(\one+A\t B\theta)^{-1}\det(\one +B\theta).\eeq
Now we need to use
(\ref{star-product-formula}).
\qed

\section{Positive elements of the oscillator semigroup}

         We  define
         \begin{align}
           \Sym_\p(\bbR^{2d})&:=\{A\in \Sym_\+(\bbR^{2d})\mid \sigma(A\theta)\subset[-1,1]\},\\
          \Sym_\p^\qnd(\bbR^{2d})&:=             \{A\in \Sym_\p(\bbR^{2d})\mid \det(A\theta+\one)\neq0\}.
               \end{align}

        \bep Let $a\in\bbC$ and $A\in\Sym_\+(\bbC^{2d})$. Then
        \ben
\item $\big(a\Op(\e^{-A})\big)^*=\bar a\Op(\e^{-\bar A})$.
      \item $a\Op(\e^{-A})$ is Hermitian iff $a\in\bbR$ and $A\in\Sym_\+(\bbR^{2d})$.
        \item 
          $a\Op(\e^{-A})$ is positive iff $a>0$,
          $A\in\Sym_\p(\bbR^{2d})$.
          \een\label{prio1}\eep

          \proof (1) and (2) are follow by the obvious identity $\Op(a)^*=\Op(\bar a)$.

           Let us prove (3).
$A$ is a positive definite real matrix and $\omega$ is a symplectic matrix. It is well known, that they can be simultaneously diagonalized, that is, one can find a basis of $\bbR^{2d}$ such that if we write 
          $\bbR^{2d}=\loplus_{i=1}^d\bbR^2$, then $\omega$ is the direct sum of $\begin{bmatrix}0&1\\-1&0\end{bmatrix}$ and
            $A$ is the direct sum of
            $\begin{bmatrix}\lambda_i&0\\0&\lambda_i\end{bmatrix}$
              with $\lambda_i>0$. After an appropriate metaplectic transformation, we can represent the Hilbert space $L^2(\bbR^d)$ as
              $\lotimes_{i=1}^d L^2(\bbR)$ and $\Op(\e^{-A})$ can be represented as $\lotimes_{i=1}^d\Op\big(\e^{-\lambda_i(x_i^2+p_i^2)}\big)$.
              Next we use (\ref{propo}) to see that the positivity of $\Op(\e^{-A})$ is equivalent to $\lambda_i\leq1$, $i=1,\dots,d$, which in turn is equivalent to $\sigma(A\theta)\subset [-1,1]$ (the eigenvalues of $A\theta$ are of the form $\pm \lambda_i$). \qed

              \bep $\Sym_\p^\qnd(\bbR^{2d})=
\{A\in \Sym_\+(\bbR^{2d})\mid \sigma(A\theta)\subset]-1,1[\}.$
            \label{symp-in-sym+}  \eep

              \proof We use the basis that is mentioned at the end of the proof of Prop. \ref{prio1}. \qed

         \bep $\det(\one+A\theta)=\bar{\det(\one+\bar A\theta)}$. Consequently, 
         $\Sym^{\qnd}(\bbC^{2d})$ and $\Sym_\+^{\qnd}(\bbC^{2d})$ are invariant with respect to complex conjugation. \eep

         \proof We use \begin{eqnarray}
           \theta(\one+A\theta)\theta&=&\one+\theta A,\\
           (\one+\theta A)^\t&=&\one-A\theta,\\
           \bar{\one-A\theta}&=&\one+\bar A\theta.\end{eqnarray}
         \qed
         


        \bet\label{AtA}
\ben\item If $A\in \Sym_\+(\bbC^{2d})$, then
$\bar A\t A\in\Sym_\p(\bbR^{2d})$.
\item  If $A\in \Sym_\+^\qnd(\bbC^{2d})$, then
  $\bar A\t A\in\Sym_\p^\qnd(\bbR^{2d})$. \een
\eet

\proof (1): Let $A\in\Sym_\+(\bbC^{2d})$.
Then
 \begin{eqnarray}
              \Op(\e^{-A})^*\Op(\e^{-A})&=&
              \frac{1}{\sqrt{\det(\one+\bar A\theta A\theta)}}\e^{-\bar A\t A}
                       \end{eqnarray}
     is a positive operator. Therefore,
     by  Proposition \ref{prio1} (3),     $\bar A\t A\in        \Sym_\p(\bbR^{2d})$.
     
           (2):  $\Sym_\+^\qnd(\bbC^{2d})$ is a semigroup invariant wrt the conjugation, and hence $\bar A\t A\in
     \Sym_\+^\qnd(\bbC^{2d})$.
     By (1), $\bar A\t A\in \Sym_\p(\bbR^{2d})$.
     But by definition
           $\Sym_\p^\qnd(\bbR^{2d})
           =\Sym_\p(\bbR^{2d})\cap \Sym_\+^\qnd(\bbC^{2d})$. 
           \qed

         \section{Complex symplectic group}

A linear operator $R$ on $\bbR^{2d}$ is called \emph{symplectic} if
          \beq R^\t \omega R=\omega.\label{sympl}\eeq
          The set of
symplectic
operators  on $\bbR^{2d}$ will be denoted $Sp(\bbR^{2d})$. It is the well known \emph{symplectic group} in dimension $2d$.

In our paper a more important role is played by the complex version of the symplectic group. More precisely, we will say that a complex linear operator $R$ on $\bbC^{2d}$ is symplectic if (\ref{sympl}) holds.          (Of course, we can replace
          $\omega$ in (\ref{sympl}) with $\theta$).
The set of complex symplectic
operators  on $\bbC^{2d}$ will be denoted $Sp(\bbC^{2d})$. It is also a group, called the \emph{complex symplectic group} in dimension $2d$.

We define
\begin{align}
  Sp_+(\bbC^{2d})&:=\{R\in Sp(\bbC^{2d})\mid R^* \theta R\leq \theta\},\\
  Sp_\+(\bbC^{2d})&:=\{R\in Sp(\bbC^{2d})\mid R^* \theta R< \theta\}.
  \end{align}
  $Sp_+(\bbC^{2d})$ and $Sp_\+(\bbC^{2d})$
  are semigroups satisfying
  \begin{align} Sp(\bbR^{2d})\cap Sp_\+(\bbC^{2d})&=\emptyset,\\
    Sp_\+(\bbC^{2d})&\subset Sp_+(\bbC^{2d}),\\
    Sp(\bbR^{2d})&\subset Sp_+(\bbC^{2d}).
\end{align}

          We also set
          \begin{eqnarray}
Sp_\h(\bbC^{2d})&:=&\{R\in
Sp(\bbC^{2d}) :\ \bar R=R^{-1}\}\\
&=&\{R\in
            Sp(\bbC^{2d}) :\ R^*\theta=\theta R\},\\
                       Sp_{\mathrm{p}}(\bbC^{2d})&:=&\{R\in
 Sp_\h(\bbC^{2d}) :\ \sigma( R)\subset]0,\infty[\}.\end{eqnarray}

Below we state a few properties of $Sp_\+(\bbC^{2d})$ and  $Sp_\p(\bbC^{2d})$. It will be convenient to defer their proofs to the next section.

\bep $Sp_\p(\bbC^{2d})\subset 
Sp_\+(\bbC^{2d})$.
\label{pro1-}\eep

Let  $t>0$.
Note that $\bbC\backslash]-\infty,0]\ni z\mapsto z^t\in\bbC$ is a well defined holomorphic function. In the proposition below $\sigma(R)\subset]0,\infty[$, therefore $R^t$ is well defined.
          
          \bep Let $R\in Sp_{\rm p}(\bbC^{2d})$. Then $R^t\in Sp_{\rm p}(\bbC^{2d})$.
        \label{proi0}  \eep

\bep Let $R\in Sp_\+(\bbC^{2d})$. Then $\bar R^{-1}R\in Sp_\p(\bbC^{2d}).$
\label{proi2}\eep

The next result, which
is an analog of the polar decomposition, was noted by Howe (see \cite{Ho}, Proposition (23.7.2)):
\begin{proposition}\label{matrix-decomp}
	Every $R\in Sp_\+(\bbC^{2d})$ may be decomposed in the following way:
	\begin{equation}
		R = TS,
	\end{equation}
	where $T:=\bar{R}\sqrt{\bar{R}^{-1}R}\in Sp(\bbR^{2d})$ and $S:=\sqrt{\bar{R}^{-1}R}\in Sp_\p(\bbC^{2d})$. 
\eep

\section{Relationship between $\Sym$  and symplectic group}

Let us define
\begin{align}
  Sp^\reg(\bbC^{2d})&=\left\{R\in Sp(\bbC^{2d})\ |\ R+\one \hbox{ is invertible }\right\},\\
  Sp_\h^\reg(\bbC^{2d})&=\left\{R\in Sp_\h(\bbC^{2d})\ |\ R+\one \hbox{ is invertible }\right\}.
  \end{align}

\bet\label{bijections-th}
\ben\item
$
\Sym^\qnd(\bbC^{2d})\ni A\mapsto c(A\theta)\in Sp^\reg(\bbC^{2d})$ is a bijection.
Its inverse is
\beq
 Sp^\reg(\bbC^{2d})\ni R\mapsto c(R)\theta\in
\Sym^\qnd(\bbC^{2d}).\eeq
Besides, if $A,B\in \Sym^\qnd(\bbC^{2d})$ and $A\t B\in \Sym^\qnd(\bbC^{2d})$ is well defined, then
\beq c(A\t B\theta)=c(A\theta)c(B\theta).\label{produ1}\eeq
\item
$ \Sym_\+^\qnd(\bbC^{2d})\ni A\mapsto c(A\theta)\in Sp_\+(\bbC^{2d})$ is an isomorphism of semigroups.
\item
$\Sym^\qnd(\bbR^{2d})\ni A\mapsto c(A\theta)\in Sp_\h^\reg(\bbC^{2d})$
  is a bijection.
\item
$\Sym_{\mathrm{p}}^\qnd(\bbR^{2d})\ni A\mapsto c(A\theta)\in Sp_\p(\bbC^{2d})$
  is a bijection.
          \een\eet

          \proof (1): Let $A\in \Sym^\qnd(\bbC^{2d})$. Then,
          \begin{eqnarray*}
            c(A\theta)^\t \theta c(A\theta)
            &=&
            (\one-
            \theta A)^{-1}(\one+
            \theta A)\theta(\one-A\theta)(\one+A\theta)^{-1}
            \\&=&
            (\one-
            \theta A)^{-1}(\one-\theta  A\theta A\theta)
            (\one+A\theta)^{-1}\\&=&
            (\one-
            \theta A)^{-1}(\one-
            \theta A)\theta(\one+A\theta)(\one+A\theta)^{-1}\ =\ \theta.
            \end{eqnarray*}
          Hence, $ c(A\theta)\in Sp(\bbC^{2d})$.

          Conversely, let $R\in Sp^\reg(\bbC^{2d})$. Then
    \begin{eqnarray*}
      \Big((\one-R)(\one+R)^{-1}\theta\Big)^\t 
      &=&-\theta(\one+R^\t )^{-1}(\one-R^\t )
      \\
  =    -(\theta+R^\t \theta)^{-1}(\one-R^\t )&=&
      -\big(\theta(\one+R^{-1})\big)^{-1}(\one-R^\t )\\
      =-(\one+R^{-1})^{-1}(\theta-\theta R^\t )
      &=&-(\one+R^{-1})^{-1}(\one- R^{-1})\theta
      \\=(\one+R)^{-1}(\one-R)\theta.&&
      \end{eqnarray*}
Hence, $c(R)\theta\in\Sym(\bbC^{2d})$.
   
Clearly, $A\theta+\one$ is invertible  iff $c(A\theta) \in L^\reg(\bbC^{2d})$. Thus $
\Sym^\qnd(\bbC^{2d})\ni A\mapsto c(A\theta)\in Sp^\reg(\bbC^{2d})$ is a bijection.

To see (\ref{produ1}) it is enough to use  (\ref{produ}).

          (2): We have
\begin{eqnarray*}
            c(A\theta)^*\theta c(A\theta)
            &=&
            (\one+
            \theta \bar A)^{-1}(\one-
            \theta \bar A)\theta(\one-A\theta)(\one+A\theta)^{-1}
            \\&=&
            (\one+
            \theta \bar A)^{-1}(\one-\theta \bar A\theta-\theta A\theta+\theta \bar A\theta A\theta)
            (\one+A\theta)^{-1}\\&=&
          \theta-2  (\one+
            \theta \bar A)^{-1}\theta(\bar A+A)\theta (\one+A\theta)^{-1}.
            \end{eqnarray*}
Thus,
\beq         c(A\theta)^*\theta c(A\theta)<\theta\eeq
iff $\bar A+A>0$.
Hence $ \Sym_\+^\qnd(\bbC^{2d})\ni A\mapsto c(A\theta)\in Sp_\+(\bbC^{2d})$ is a bijection. It is a homomorphism because of  (\ref{produ1}).

(3): Let $A\in \Sym^\qnd(\bbR^{2d})$. Then
\beq
  \bar{c(A\theta)}=\frac{\one+A\theta}{\one-A\theta}=c(A\theta)^{-1}.
  \eeq
  Hence, $c(A\theta)\in Sp_\h(\bbC^{2d})$.

  Conversely, let  $R\in Sp_\h^\reg(\bbC^{2d})$.
  Then
      \begin{eqnarray}
      \bar{(\one-R)(\one+R)^{-1}\theta}&=&
      -(\one-\bar R)(\one+\bar R)^{-1}\theta\\
      = -(\one- R^{-1})(\one+ R^{-1})^{-1}\theta&=&(\one- R)(\one+ R)^{-1}\theta.
      \end{eqnarray}
      Hence, $c(R)\theta \in \Sym(\bbR^{2d})$.
      
(4): Clearly, \[\lambda\in]-1,1[\quad\text{ iff }\quad \frac{1-\lambda}{1+\lambda}\in]0,\infty[.\] Therefore, \[\sigma(A\theta)\subset]-1,1[\quad\text{ iff }\quad \sigma\big(c(A\theta)\big)\subset]0,\infty[.\]
Then we use the characterisation of $\Sym_\p^\qnd(\bbR^{2d})$ given in Prop. \ref{symp-in-sym+}.  \qed

\begin{proof}[Proof of Proposition \ref{pro1-}.]
	Let $R\in Sp_\p(\bbC^{2d})$. By Theorem \ref{bijections-th} (4), $c(R)\theta\in\Sym_\p^\qnd(\bbR^{2d})$. Proposition \ref{symp-in-sym+} implies that $c(R)\theta\in\Sym_\+^{\qnd}(\bbR^{2d})$. Now Theorem \ref{bijections-th}~(2) shows that $R= c\big((c(R)\theta)\theta\big)\in Sp_\+(\bbC^{2d})$.
\end{proof}

\begin{proof}[Proof of Proposition \ref{proi0}]
  Let $R\in Sp_\p(\bbC^{2d})$.

Functional calculus of operators is invariant wrt  similarity transformations. Therefore,  $R^{\t(-1)}=\theta R\theta^{-1}$ implies
  $R^{\t(-t)}=\theta R^t\theta^{-1}$. Hence $R^t\in  Sp(\bbC^{2d})$.
  
  $\bar R=R^{-1}$ implies $\bar R^t=(R^t)^{-1}$. Hence, $R^t\in Sp_\h(\bbC^{2d})$.

  $\sigma(R)\subset]0,\infty[$ implies  $\sigma(R^t)\subset]0,\infty[$.
          Hence $R^t\in Sp_\p(\bbC^{2d})$.
\end{proof}

\begin{proof}[Proof of Proposition \ref{proi2}.]
	Theorem \ref{bijections-th}~(2) assures us that we can find a matrix $A\in\Sym_\+^\qnd(\bbR^{2d})$, such that $c(A\theta)=R$.
	By Theorem \ref{AtA} (2), $\bar{A}\t A\in \Sym_\p^\qnd(\bbR^{2d})$.
        Now we may use Theorem \ref{bijections-th} (4) to see that $c(\bar{A}\t A\theta)\in Sp_\p(\bbC^{2d})$.

        It is easy to check that $\bar R^{-1}=c(\bar{A}\theta)$. Moreover, by (\ref{produ1}),
	\begin{equation}
		\bar R^{-1}R = c(\bar{A}\theta)c(A\theta) = c(\bar{A}\t A\theta).
	\end{equation}
        Therefore, $\bar R^{-1}R \in Sp_\p(\bbC^{2d})$.
\end{proof}

\begin{proof}[Proof of Proposition \ref{matrix-decomp}.]
  By Prop. \ref{proi2}, $\bar R^{-1}R\in Sp_\p(\bbC^{2d})$. By Prop.
\ref{proi0},  $S:=\sqrt{\bar R^{-1}R}\in Sp_\p(\bbC^{2d})$. Clearly, $\bar R\in Sp(\bbC^{2d})$.  Hence, $T:=\bar RS\in Sp(\bbC^{2d})$.

	\begin{equation}
		\bar{T} = \bar{\bar{R} S}
				= R S^{-1}=R S^{-2}S
			= R\, R^{-1} \bar{R}\, S = T.
	\end{equation}
        Therefore, $T:=\bar RS\in Sp(\bbR^{2d})$.
\end{proof}

\bet   The map
      \[\Osc_\+^\nor(\bbC^{2d})\ni
      \pm\sqrt{\det(\one+A\theta)}\Op(\e^{-A})\mapsto
         c(A\theta)\in Sp_\+ (\bbC^{2d})\]
        is a 2-1 epimorphism of semigroups.
\eet
\proof
We use Proposition \ref{semig} and Theorem \ref{bijections-th}~(2). \qed

\section{Metaplectic group}

It is easy to see that if $C\in\Sym(\bbR^{2d})$, then $c(C\omega)\in Sp(\bbR^{2d})$. In fact, elements of this form constitute an open dense subset of $Sp(\bbR^{2d})$.

We define $Mp(\bbR^{2d})$, called the metaplectic group in dimension $2d$,
to be the group generated by operators of the form
\beq\pm\sqrt{\det(\one+ C\omega)}\Op(\e^{-\i C}),\quad C\in\Sym(\bbR^{2d}).\label{meta}\eeq

The theory of the metaplectic group is well known, see e.g.
\cite{DG}, Section 10.3.1. We assume that the reader is familiar with its basic elements. Actually, we have already used it in our proof of
Prop. \ref{prio1} (3).

The theory of the metaplectic group
can be summed up by the following theorem:
\begin{theorem}
  The metaplectic group consists of unitary operators. Operators of the form (\ref{meta}) constitute an open and dense subset of $Mp(\bbR^{2d})$.
  The map
  \beq  \pm\sqrt{\det(\one+C\omega)}\Op(\e^{-\i C})\mapsto c(C\omega)\label{epi1}\eeq
  extends by continuity to a $2-1$ epimorphism $Mp(\bbR^{2d})\to 
  Sp(\bbR^{2d})$
\end{theorem}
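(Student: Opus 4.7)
The plan is to reduce the theorem to the oscillator semigroup machinery of the earlier sections, treating the generators \eqref{meta} as boundary-value analogues of elements of the normalized oscillator semigroup $\Osc_\+^\nor(\bbC^{2d})$, then invoking standard facts about the metaplectic covering.

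First I would verify that each generator $U_C:=\pm\sqrt{\det(\one+C\omega)}\Op(\e^{-\i C})$ with $C\in\Sym(\bbR^{2d})$ and $\det(\one+C\omega)\ne 0$ is unitary. The key computation is $(-\i C)\t(\i C)=0$, which follows directly from \eqref{pro0} once one substitutes $A=-\i C$, $B=\i C$ and observes that $\theta C\theta$ is real. Applying the product formula \eqref{star-product-formula}—extended to this boundary case by a limiting argument from $A=\epsilon+\i C$, $\epsilon\to 0^+$—and using $\det(\one+C\theta C\theta)=\det(\one+C\omega)^2$ (via $\theta=-\i\omega$ and the identity $\det(\one+\omega C)=\det(\one+C\omega)$) would give $U_C^*U_C=\pm\one$; the sign ambiguity is absorbed into the $\pm$ in the normalisation. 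Since $Mp(\bbR^{2d})$ is generated by the $U_C$ and unitaries form a group, every element of $Mp(\bbR^{2d})$ is then unitary.

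Next I would prove the homomorphism property of \eqref{epi1}. A brief calculation with $\theta=-\i\omega$ shows that whenever $(\i C_1)\t(\i C_2)$ is defined it equals $\i C_3$ for some real symmetric $C_3$, and the argument proving Proposition \ref{semig} transplants verbatim to give $U_{C_1}U_{C_2}=\pm U_{C_3}$. By Theorem \ref{bijections-th}(3) together with \eqref{produ1}, this translates into $c(C_3\omega)=c(C_1\omega)c(C_2\omega)$ in $Sp(\bbR^{2d})$, so \eqref{epi1} is multiplicative on its natural domain. Again by Theorem \ref{bijections-th}(3), the image of \eqref{epi1} on the generators is precisely $\{R\in Sp(\bbR^{2d})\mid \one+R \text{ invertible}\}$, an open dense subset of $Sp(\bbR^{2d})$ as the complement of a proper analytic subvariety. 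Since the standard metaplectic projection $Mp(\bbR^{2d})\to Sp(\bbR^{2d})$ is a local homeomorphism, its preimage—which coincides with the set of operators \eqref{meta}—is open and dense in $Mp(\bbR^{2d})$, yielding the density assertion.

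Finally, the $\pm$ in the square-root normalisation produces exactly two preimages in $Mp(\bbR^{2d})$ for each regular $R$, so \eqref{epi1} is already 2-1 on its dense domain and agrees there with the standard metaplectic covering; by density and continuity the map extends uniquely to the global 2-1 epimorphism $Mp(\bbR^{2d})\to Sp(\bbR^{2d})$. The main obstacle I anticipate is making the product formula \eqref{star-product-formula} rigorous at the pure-imaginary boundary $A=\i C$, where $\e^{-\i C}$ is merely oscillatory: the Gaussian integrals in its derivation must be reinterpreted as oscillatory integrals or obtained as an $\epsilon\to 0^+$ limit in an operator topology that preserves boundedness of both sides. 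A secondary subtlety is tracking the sign $\epsilon=\pm 1$ through repeated compositions so that the $\pm$ in \eqref{meta} yields exactly a 2-1, rather than many-to-one, correspondence; this is essentially the content of the metaplectic cocycle.
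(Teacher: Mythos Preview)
The paper does not prove this theorem. Immediately before stating it, the text says ``The theory of the metaplectic group is well known, see e.g.\ \cite{DG}, Section 10.3.1. We assume that the reader is familiar with its basic elements,'' and the theorem is presented as a summary of that background, with no proof offered. So there is nothing to compare your argument against: the paper's ``proof'' is the citation of \cite{DG}.

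Your proposal to \emph{derive} the metaplectic theorem from the oscillator-semigroup machinery is therefore a genuinely different and more ambitious route. The algebraic core is sound: the identity $(-\i C)\t(\i C)=0$ is just the stated relation $A\t(-A)=0$ with $A=-\i C$, and your determinant computation $\det(\one+C\theta C\theta)=\det(\one+C\omega)^2$ is correct (using $\det(\one-C\omega)=\det(\one+C\omega)$, which follows from $\omega(\one+C\omega)\omega^{-1}=\one+\omega C=(\one-C\omega)^\t$). One small correction: Theorem~\ref{bijections-th}(3) concerns \emph{real} $A$ and the target $Sp_\h^\reg(\bbC^{2d})$, not purely imaginary $A=\i C$; what you actually need is the simpler observation that $A\theta=(\i C)(-\i\omega)=C\omega$ is real, so $c(A\theta)=c(C\omega)\in Sp(\bbR^{2d})$ by part~(1) together with reality.

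The genuine gap is the one you flag yourself. The product formula \eqref{star-product-formula} is proved in the paper only for $A,B\in\Sym_\+(\bbC^{2d})$, where the Gaussian integrals converge absolutely; at $A=\i C$ the integrand is purely oscillatory and the formula requires a separate justification (oscillatory integrals or a limiting argument with control of the operator norm). Likewise, showing that the sign $\epsilon$ organises into exactly a two-valued cocycle---rather than collapsing or proliferating---is precisely the construction of the metaplectic double cover, and cannot be obtained by soft density/continuity alone. Both of these are essentially what \cite{DG} supplies, which is why the paper simply cites it.
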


\begin{remark}
  For completeness, one should mention some other natural semigroups
  closely related to $\Osc_\+(\bbC^{2d})$:
  \begin{enumerate}
  \item $\Osc_+(\bbC^{2d})$ generated by operators of the form $a\Op(\e^{-A})$ with $A\in\Sym_+(\bbC^{2d})$, $a\in\bbC$;
\item $\Osc_+^\nor(\bbC^{2d})$ generated by operators of the form
$\pm\sqrt{\det(\one+A\theta)}\Op(\e^{-A})$ with 
 $A\in\Sym_+(\bbC^{2d})$.\end{enumerate}
\end{remark}

	\section{Polar decomposition}

        For an operator $V$,  its \emph{absolute value} is defined as
        \beq |V|:=\sqrt{V^*V}.\eeq
        The following theorem provides a formula for
        the absolute value of elements of the oscillator semigroup.

        \bet\label{AtA1}
        Let $A\in \Sym_\+^\qnd(\bbC^{2d})$. Then
            \beq\big|\Op(\e^{-A})\big|=
            \frac{
\sqrt[4]{\det\big(\one+(B\theta)^2\big)}}
            {\sqrt[4]{\det(\one+\bar A\theta A\theta)}}
            \Op(\e^{-B}),\label{pros1}
            \eeq
            where
          \beq
          B = c\Big(\sqrt{c(\bar A\theta)c(A\theta)} \Big)\theta.\label{pros2}\eeq 
Besides, the function
\[\Sym_\+^\qnd(\bbC^{2d})\ni A\mapsto \big|\Op(\e^{-A})\big|
\]
        is smooth.\eet

        \proof By Prop. \ref{proi2},
        $          c(\bar A\theta)c(A\theta)
    =\bar{c(A\theta)}^{-1}c(A\theta)    \in Sp_\p(\bbC^{2d})$.
        Hence, by Prop. \ref{proi0}  we can define
              $        \sqrt{  c(\bar A\theta)c(A\theta)}\in Sp_\p(\bbC^{2d})$.
Therefore, $B$ defined in (\ref{pros2})  belongs to $\Sym_\p^\qnd(\bbR^{2d})$ and satisfies
        $\bar A\t A=B\t B$.
       
                 We have 
       \beq\Op(\e^{-B})^2=\frac{1}{\sqrt{\det(\one+(B\theta)^2}}\Op\big(\e^{-B\t B}\big) .\eeq
            Hence,
            \begin{eqnarray}
              \Op(\e^{-A})^*\Op(\e^{-A})
                        &=&\frac{\sqrt{\det\big(\one+(B\theta)^2\big)}}
                   {\sqrt{\det(\one+\bar A\theta A\theta)}}
                   \Op(\e^{-B})^2.
            \end{eqnarray}
Besides, $\Op(\e^{-B})\geq0$.            Therefore,
            $\big|\Op(\e^{-A})\big|$ is given by (\ref{pros1}).

Now  the square root is a smooth function on the set of invertible matrices (and obviously on the set of nonzero numbers).
In the formula
(\ref{pros2}) for $A\in \Sym_\+^\qnd(\bbC^{2d})$ we never need to take roots of zero or of non-invertible matrices, because $\one\pm A\theta$ and $\one\pm\bar A\theta$ are invertible.
Therefore,
\beq  \Sym_\+^\qnd(\bbC^{2d})\ni A\mapsto
\sqrt{c(\bar A\theta)c(A\theta)}\eeq
is smooth.
 Therefore, the map $A\mapsto B$ is smooth

 For  $A\in \Sym_\+^\qnd(\bbC^{2d})$, $\bar A,B\in \Sym_\+^\qnd(\bbC^{2d})$. Therefore, by Prop. \ref{prio},
$\one+\bar A\theta A\theta$ and $\one+(B\theta)^2$ are invertible. Hence,
 the prefactors of (\ref{pros1}) are smooth. This ends the proof of the smoothness of
 (\ref{pros1}).
\qed

Let $V$ be a closed operator such that ${\rm Ker} V={\rm Ker} V^*=\{0\}$. Then it is well known that there exists a unique unitary operator $U$
such that we have the identity
\beq V=U|V|.\eeq
called the \emph{polar decomposition}.

\begin{theorem}
	  Let $A\in \Sym_\+^\qnd(\bbC^{2d})$. Let $B\in \Sym_\p^\qnd(\bbC^{2d})$ be defined as in
          (\ref{pros2}). Then        
          \beq
          \Big|\sqrt{\det(\one+A\theta)}\Op(\e^{-A})\Big|=
          \sqrt{\det(\one+B\theta)}\Op(\e^{-B}),\label{meta1}\eeq
          and the unitary operator $U$ that appears in the polar decomposition
          \beq
          \sqrt{\det(\one+A\theta)}\Op(\e^{-A})
          =
          U\sqrt{\det(\one+B\theta)}\Op(\e^{-B})\eeq
          belongs to $Mp(\bbR^{2d})$. Besides, if
          \beq \i C:=A\t(-B)\eeq
          is well defined, then
          \beq
          U=\epsilon\sqrt{\det(\one+C\omega)}\Op(\e^{-\i C}),\eeq
          where $\epsilon=1$ or $\epsilon=-1$. 
\end{theorem}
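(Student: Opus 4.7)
The plan is to establish the three assertions by combining Theorem \ref{AtA1} with the product formula (\ref{star-product-formula}) and the factorisation identity (\ref{prio}). For the absolute value formula (\ref{meta1}), I would start from
\[\big|\sqrt{\det(\one+A\theta)}\Op(\e^{-A})\big| = \sqrt{|\det(\one+A\theta)|}\,\big|\Op(\e^{-A})\big|,\]
substitute the expression from Theorem \ref{AtA1}, and equate the result to $\sqrt{\det(\one+B\theta)}\Op(\e^{-B})$. After two squarings this reduces to
\[|\det(\one+A\theta)|^2\det(\one+(B\theta)^2)=\det(\one+B\theta)^2\det(\one+\bar A\theta A\theta).\]
Applying (\ref{prio}) with the pair $(\bar A,A)$ gives $\det(\one+\bar A\theta A\theta) = |\det(\one+A\theta)|^2/\det(\one+\bar A\t A\theta)$, and applying (\ref{prio}) with the pair $(B,B)$ gives $\det(\one+(B\theta)^2)=\det(\one+B\theta)^2/\det(\one+B\t B\theta)$. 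Since $\bar A\t A = B\t B$ by the construction of $B$ in the proof of Theorem \ref{AtA1}, these two identities combine at once; positivity of all the intermediate determinants lets the fourth-root factors in (\ref{pros1}) be extracted without sign ambiguity.

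Under the hypothesis that $\i C:=A\t(-B)$ is well defined, I would handle the formula for $U$ by setting $\tilde U:=\epsilon\sqrt{\det(\one+C\omega)}\Op(\e^{-\i C})$ for a suitable sign $\epsilon\in\{1,-1\}$ and computing $\tilde U\cdot\sqrt{\det(\one+B\theta)}\Op(\e^{-B})$ via (\ref{star-product-formula}). The Weyl symbol of the product works out to $\e^{-A}$ by associativity of $\t$ together with $(-B)\t B=0$:
\[(\i C)\t B=\big(A\t(-B)\big)\t B=A\t\big((-B)\t B\big)=A.\]
The scalar prefactor reduces to $\sqrt{\det(\one+A\theta)}$ after applying (\ref{prio}) to the pair $(\i C,B)$: the factors of $\det(\one+\i C\theta)=\det(\one+C\omega)$ and $\det(\one+B\theta)$ cancel, leaving exactly the required determinant. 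Hence $\tilde U\cdot|V|=\pm V$, i.e.\ $\tilde U=\pm U$ where $V:=\sqrt{\det(\one+A\theta)}\Op(\e^{-A})$.

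Finally I need to argue $C\in\Sym(\bbR^{2d})$, so that $\tilde U$ genuinely lies in the generating set of $Mp(\bbR^{2d})$. Since $V$ is injective with dense range (it corresponds to the invertible symplectic element $c(A\theta)\in Sp_\+(\bbC^{2d})$), the polar-decomposition factor $U$ is truly unitary, and writing $U=\alpha\Op(\e^{-\i C})$ and imposing $UU^*=\one$ via (\ref{star-product-formula}) forces $\i C\t\overline{\i C}=0$. Uniqueness of $\t$-inverses, which follows from associativity together with $X\t(-X)=0$, then gives $\overline{\i C}=-\i C$, so $C$ is real. For the unconditional statement $U\in Mp(\bbR^{2d})$ I would use a continuity argument: the set of $A$ for which $A\t(-B)$ is defined is open and dense in $\Sym_\+^\qnd(\bbC^{2d})$, the map $A\mapsto U$ is continuous by the smoothness of $A\mapsto|V|$ from Theorem \ref{AtA1} combined with continuity of operator inversion, and $Mp(\bbR^{2d})$ is closed in the strong operator topology. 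I expect the main obstacles to be the careful bookkeeping of square-root branches in the determinantal prefactors and the justification that $V$ is injective with dense range so that the polar-decomposition unitary exists in the first place.
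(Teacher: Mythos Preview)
Your argument for (\ref{meta1}) and for the identity $\tilde U\cdot|V|=\pm V$ is essentially the paper's: both reduce via (\ref{prio}) applied to the pairs $(\bar A,A)$ and $(B,B)$, combined with $\bar A\t A=B\t B$, and both recover $A$ from $(\i C)\t B$ using associativity and $(-B)\t B=0$.

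The one substantive divergence is in proving $C\in\Sym(\bbR^{2d})$. You deduce it from unitarity: $UU^*=\one$ should force $\i C\t\overline{\i C}=0$, and then uniqueness of $\t$-inverses gives $\overline{\i C}=-\i C$. The paper instead computes $\overline{A\t(-B)}$ directly from the algebraic identities $\overline{X\t Y}=\bar Y\t\bar X$, $X\t(-X)=0$, and $(-X)\t(-Y)=-Y\t X$, together with $\bar B=B$ and $\bar A\t A=B\t B$:
\[
\overline{A\t(-B)}=(-B)\t\bar A=(-B)\t\bar A\t A\t(-A)=(-B)\t B\t B\t(-A)=B\t(-A)=-A\t(-B).
\]
This is shorter and purely algebraic. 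It also avoids a technical snag in your route: to read off $\i C\t\overline{\i C}=0$ from $UU^*=\one$ you must apply (\ref{star-product-formula}) to $\Op(\e^{-\i C})\Op(\e^{-\overline{\i C}})$, but that formula is stated only for symbols in $\Sym_\+(\bbC^{2d})$, and once $C$ is real neither $\i C$ nor $\overline{\i C}$ lies there. The extension is not hard, but the paper's computation sidesteps it entirely.

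Your density/continuity argument for the unconditional assertion $U\in Mp(\bbR^{2d})$ is a genuine addition: the paper's proof only treats the case where $A\t(-B)$ is well defined and does not separately justify the general statement.
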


\begin{proof} By (\ref{prio}),
  \begin{align}
    \one+\bar A\theta A\theta&=(\one+\bar A\theta)(\one+\bar A\t A\theta)^{-1}(\one +A\theta), \\
\one+B\theta B\theta&=(\one+B\theta)(\one+B\t B\theta)^{-1}(\one +B\theta),     
\end{align} Besides, $\bar A\t A= B\t B$. This together with
  (\ref{pros1}) implies  
(\ref{meta1}).

  Assume now that
$ \i C:=A\t(-B)$ is well defined. 
Then clearly	\begin{align}&
\sqrt{\det(\one+A\theta)}\Op(\e^{-A})\\ = &\epsilon
\sqrt{\det(\one+B\theta)}\Op(\e^{-B})\sqrt{\det(\one+\i C\theta)}\Op(\e^{-\i C}).
\end{align}

It remains to show that $\i C$ is purely imaginary.
\begin{align}
  \bar{A\t (-B)}\quad=\quad(-B)\t \bar A&=
  (-B)\t \bar A\t A(-A)\\
  &=(-B)\t B\t B\t(-A)\\
  &=B\t(-A)\quad=\quad-A\t(-B).\end{align}
\end{proof}

\section{Trace and the trace norm}
            
Suppose  we have an operator $K$ on $\mathrm{L}^2(\bbR^n)$.
 As proven in \cite{D} (for a more general setting, see \cite{B1} and \cite{B2}), if $K$ has a continuous kernel $K(x,\,y)$ belonging to $\mathrm{L}^2(\bbR^n\times \bbR^n)$ and $x\mapsto Kx,\ x)$ is in $\mathrm{L}^1(\bbR^n)$, then 
	\begin{equation}
		\Tr K = \int K(x,\,x)\,\mathrm{d}x.
	\end{equation}

        In the case of Weyl--Wigner quantization, for a symbol $a$ we get
	\begin{equation}
		\Tr\Op(a) = \int \Op(a)(x,\,x)\,\mathrm{d}x = (2\pi)^{-d}\int a(x,\,\xi)\,\mathrm{d}x\,\mathrm{d}\xi.
	\end{equation}
	This easily implies the followig proposition:
	\begin{proposition}\label{trace-formula}
		The trace of operator $\Op(\e^{- A})$ with $A\in\Sym_\+(\bbC^{2d})$ is
		\begin{equation}
		  \Tr\Op(\e^{-A}) = \frac{1}{2^{d}\sqrt{\det A}}
                  =\frac{1}{2^{d}\sqrt{\det A\theta}}.
		\label{deter}\end{equation}
	\end{proposition}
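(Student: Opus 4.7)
The plan is to apply the trace formula stated immediately above the proposition to the Gaussian symbol $a(y)=\e^{-y^\t A y}$, with $y=(x,p)\in\bbR^{2d}$, and compute the resulting Gaussian integral. One first checks that the hypotheses of the trace formula are met: since $A\in\Sym_\+(\bbC^{2d})$, the kernel of $\Op(\e^{-A})$ (computed in the proposition giving kernels of elements of $\Osc_\+(\bbC^{2d})$) is a centered Gaussian, hence continuous and in $L^2$, and its restriction to the diagonal is Schwartz. Thus
\[
\Tr\Op(\e^{-A}) \;=\; (2\pi)^{-d}\int_{\bbR^{2d}} \e^{-y^\t Ay}\,\d y.
\]

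Next I would evaluate the Gaussian integral. For real positive definite $A$ this is the standard identity $\int_{\bbR^{2d}}\e^{-y^\t Ay}\,\d y=\pi^{d}/\sqrt{\det A}$ (using the positive square root). For general $A\in\Sym_\+(\bbC^{2d})$ one writes $A=A_\r+\i A_\i$ with $A_\r>0$ and uses holomorphic continuation in $A_\i$: both sides are holomorphic functions of $A$ on $\Sym_\+(\bbC^{2d})$, which is connected, so the identity extends from the real slice, with $\sqrt{\det A}$ understood as the unique branch that is positive at $A_\i=0$ and continuous on $\Sym_\+(\bbC^{2d})$. Combining with the prefactor $(2\pi)^{-d}$ gives the first equality $\Tr\Op(\e^{-A})=\frac{1}{2^{d}\sqrt{\det A}}$.

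For the second equality, observe that $\det\theta$ can be computed from the block-antidiagonal formula, yielding $\det\theta=(-i)^d(i)^d=1$ (after pulling out the two scalar $d\times d$ blocks $\mp\i\one_d$ from the antidiagonal form, using $\det\bigl[\begin{smallmatrix}0&B\\C&0\end{smallmatrix}\bigr]=(-1)^d\det B\det C$). Hence $\det(A\theta)=\det A$, so $\sqrt{\det A\theta}=\sqrt{\det A}$ with the same branch convention.

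I do not anticipate any substantive obstacle beyond the bookkeeping of branches of the square root: the only subtlety is making sure that on $\Sym_\+(\bbC^{2d})$ the symbol $\sqrt{\det A}$ is interpreted as the unique holomorphic branch agreeing with the positive root on $\Sym_\+(\bbR^{2d})$, which is justified by connectedness of $\Sym_\+(\bbC^{2d})$ and the fact that $\det A$ never vanishes there (Proposition on invertibility of elements of $\Sym_\+(\bbC^n)$, applied with $n=2d$).
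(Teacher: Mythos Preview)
Your approach is exactly the paper's: apply the trace identity $\Tr\Op(a)=(2\pi)^{-d}\int a$ to the Gaussian symbol and evaluate the resulting integral; the paper simply says the proposition follows ``easily'' from that identity, and you have spelled out the details (including the analytic continuation for complex $A$). One small arithmetic slip: the block-antidiagonal formula you cite gives $\det\theta=(-1)^d\det(-\i\one_d)\det(\i\one_d)=(-1)^d$, not $1$ (for $d=1$, $\det\theta=-(-\i)(\i)=-1$); the paper's parenthetical remark makes the same claim, so the second equality as stated is only literally correct for even $d$ or under a suitable sign convention for the square root.
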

        (Note that $\det\theta=1$, hence we could insert $\theta$ in (\ref{deter})).
        
	One can also compute the trace of the absolute value of elements of the oscillator semigroup, the so-called trace norm.
	
	\begin{theorem}
		The trace norm of $\Op(\e^{- A})$, where $A\in\Sym_\+(\bbC^{2d})$, is 
		\begin{equation}\label{norma-eq}
		\Tr|\Op(\e^{- A})| = \frac{\sqrt{2}}{2^d\sqrt{\det|(\one + A\theta)(\one - \sqrt{c(A^*\theta)c(A\theta)})|}}.
		\end{equation}
        \end{theorem}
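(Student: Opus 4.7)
The natural starting point is Theorem \ref{AtA1}, which already produces a clean expression for the absolute value:
\[
\bigl|\Op(\e^{-A})\bigr|
 \;=\; \frac{\sqrt[4]{\det(\one+(B\theta)^{2})}}{\sqrt[4]{\det(\one+\bar A\theta A\theta)}}\,\Op(\e^{-B}),
\qquad B=c\!\bigl(\sqrt{c(\bar A\theta)c(A\theta)}\bigr)\theta .
\]
Because $\Op(\e^{-B})$ is a positive Weyl quantization of a Gaussian, its trace is handed to us by Proposition~\ref{trace-formula}: $\Tr\Op(\e^{-B})=2^{-d}(\det B)^{-1/2}$. So the plan is to take the trace of both sides and then reduce the three determinants
$\det(\one+(B\theta)^{2})$, $\det(\one+\bar A\theta A\theta)$ and $\det B$
to a single expression involving $\one+A\theta$ and $\one-S$, where $S:=\sqrt{c(A^{*}\theta)c(A\theta)}$.

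The algebraic workhorse is the pair of Cayley identities
\[
\one+c(X)=2(\one+X)^{-1},\qquad \one-c(X)=2X(\one+X)^{-1}.
\]
Since $B\theta=c(S)$, applying them with $X=S$ and $X=S^{2}$ gives
\[
\one+(B\theta)^{2}=2(\one+S^{2})(\one+S)^{-2},\qquad \det(B\theta)=\frac{\det(\one-S)}{\det(\one+S)},
\]
which simplifies the first numerator and the $\det B$ factor cleanly. For the middle determinant I will use the key formula \eqref{prio} with $(A,B)\leftarrow(\bar A,A)$ to split
\[
\one+\bar A\theta A\theta=(\one+\bar A\theta)(\one+\bar A\t A\theta)^{-1}(\one+A\theta),
\]
recognising $\det(\one+\bar A\theta)\det(\one+A\theta)=|\det(\one+A\theta)|^{2}$, and then using the homomorphism property from Theorem \ref{bijections-th}(1), namely $\bar A\t A\theta=c(c(\bar A\theta)c(A\theta))=c(S^{2})$, once more reduced by the Cayley identity to $2(\one+S^{2})^{-1}$. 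The factors $\det(\one+S^{2})$ cancel between numerator and denominator, the powers of $2^{d}$ collect, and what survives is an expression of the form
\[
\Tr|\Op(\e^{-A})|\;=\;\frac{\text{constant}}{\sqrt{\,|\det(\one+A\theta)|\cdot|\det(\one-S)|\,}}.
\]
The identity $\det|M|=|\det M|$ (from $|M|=\sqrt{M^{*}M}$) then lets us rewrite the denominator as $\sqrt{\det|(\one+A\theta)(\one-S)|}$, yielding the claimed formula \eqref{norma-eq}.

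The main obstacle is purely bookkeeping: keeping the fourth roots, square roots and signs of determinants under control. In particular $\det(\one-S)$ lies in $(-1)^{d}\bbR_{+}$ and $\det(B\theta)$ differs from $\det B$ by $\det\theta$, so one has to verify that the absolute values introduced by $|{\cdot}|$ reconcile with the \emph{a priori} positive fourth roots appearing in Theorem \ref{AtA1}. Once this sign/$2^{d}$ accounting is done, collecting the residual constant produces the prefactor $\sqrt{2}/2^{d}$ in the stated formula. I expect no genuine analytic difficulty beyond this; all the deep input has been front-loaded into Theorems \ref{AtA1} and \ref{bijections-th}.
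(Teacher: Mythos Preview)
Your plan is exactly the paper's proof: combine Theorem~\ref{AtA1} with Proposition~\ref{trace-formula} to get
\[
\Tr\bigl|\Op(\e^{-A})\bigr|=\frac{1}{2^{d}}\sqrt[4]{\frac{\det\bigl(\one+(B\theta)^{2}\bigr)}{\det\bigl(\one+\bar A\theta A\theta\bigr)\det(B\theta)^{2}}},
\]
and then collapse the determinants via the Cayley identities $\one+c(S)^{2}=2(\one+S^{2})(\one+S)^{-2}$, $c(S)^{2}=(\one-S)^{2}(\one+S)^{-2}$ together with \eqref{inser2} (equivalently \eqref{prio}), precisely as you outline. One caution on the bookkeeping you rightly flag as the main obstacle: tracking the scalar factors honestly gives $2^{2d}$ and $2^{4d}$ at the intermediate steps (the paper's display writes $2$ and $4$), so the prefactor that actually drops out is $1$ rather than $\sqrt{2}/2^{d}$---you can sanity-check this against \eqref{propo2} with $A=\lambda\one_{2d}$, $0<\lambda<1$, where the stated formula misses $\Tr|\Op(\e^{-A})|=(2\lambda)^{-d}$ by exactly that factor.
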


        \begin{proof} (\ref{deter}) and (\ref{pros1}) imply
           \beq\Tr\big|\Op(\e^{-A})\big|=
            \frac{
\sqrt[4]{\det\big(\one+(B\theta)^2\big)}}
            {2^d\sqrt[4]{\det(\one+\bar A\theta A\theta)(B\theta)^2}}
            .\eeq

            Now, easy algebra shows that
            \begin{eqnarray}
              &&
             \frac{ \det\big(\one+(B\theta)^2\big)}
               {\det(\one+\bar A\theta A\theta)(B\theta)^2}\\
               &=&
               \frac{2 \det\big(\one+c(\bar A\theta)c(A\theta)\big)}
                    {\det(\one+\bar A\theta A\theta)\big(\one
        -\sqrt{c(\bar A\theta)c(A\theta)}\big)^2}\\
&=&\frac{4}{\det(\one+\bar A\theta)(\one+A\theta)
\big(\one
-\sqrt{c(\bar A\theta)c(A\theta)}\big)^2}\\&            =&\frac{2^2}{\Big(\det\Big|(\one+A\theta)
                      \Big(\one
-\sqrt{c(\bar A\theta)c(A\theta)}\Big)\Big|\Big)^2}.
                 \end{eqnarray}
            	\end{proof}

        	\begin{corollary}\label{tracenormreal}
		The trace norm of $\Op(\e^{- B})$, where $B\in\Sym_\+(\bbR^{2d})$, is 
	
\beq
\Tr|\Op(\e^{- B})| = \frac{\sqrt{2}}{2^d\sqrt{\det
    \big||\one+B\theta|-|\one-B\theta|\big|}}.\eeq
        \end{corollary}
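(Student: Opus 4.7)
I would apply the theorem just proved with $A=B\in\Sym_\+(\bbR^{2d})$. Since $B$ is real symmetric, $B^{*}=B$, so $c(B^{*}\theta)c(B\theta)=c(B\theta)^{2}$, and the whole task reduces to establishing the matrix identity
\[
\det\bigl|(\one+B\theta)\bigl(\one-\sqrt{c(B\theta)^{2}}\bigr)\bigr|=\det\bigl||\one+B\theta|-|\one-B\theta|\bigr|.
\]

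The key structural fact I would exploit is that $B\theta$ is diagonalisable with real spectrum: $B^{1/2}\theta B^{1/2}$ is Hermitian and similar to $B\theta$, and by the simultaneous diagonalisation of $B$ and $\omega$ already used in the proof of Proposition \ref{prio1}, the spectrum of $B\theta$ consists of pairs $\pm\lambda_{1},\ldots,\pm\lambda_{d}$ with $\lambda_{i}>0$. Every operator appearing in the identity is rational in $B\theta$, hence they all commute and their principal-branch functional calculi are unambiguous. In particular,
\[
\sqrt{c(B\theta)^{2}}=\sqrt{(\one-B\theta)^{2}(\one+B\theta)^{-2}}=|\one-B\theta|\,|\one+B\theta|^{-1}.
\]
Writing $\one+B\theta=U_{+}|\one+B\theta|$ for the polar sign $U_{+}$, which commutes with $|\one\pm B\theta|$ and satisfies $U_{+}^{2}=\one$, a short computation gives
\[
(\one+B\theta)\bigl(\one-\sqrt{c(B\theta)^{2}}\bigr)=U_{+}\bigl(|\one+B\theta|-|\one-B\theta|\bigr).
\]
Because the eigenvalues of $U_{+}$ are $\pm 1$, $|\det U_{+}|=1$; taking operator absolute values of the two sides and then determinants yields the desired identity, after which the corollary drops out of the theorem.

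The one delicate step is $\sqrt{(\one\pm B\theta)^{2}}=|\one\pm B\theta|$, since $\one\pm B\theta$ is not self-adjoint in the standard Hilbert structure, and I expect this to be the main obstacle, though a mild one. The quantities in play are all of the form $\det|M|$, and the fundamental identity $\det|M|=|\det M|=\prod_{i}|\mu_{i}|$ (with $\mu_{i}$ the eigenvalues of $M$) depends only on the spectrum of $M$; so no genuine inner-product subtlety arises. A fully rigorous treatment either conjugates once and for all by $B^{1/2}$ to put $B\theta$ into Hermitian form, or invokes Williamson's theorem to decompose $B\theta$ into $2\times 2$ Hermitian blocks with eigenvalues $\pm\lambda_{i}$ and verifies the determinant identity block-by-block in the two regimes $\lambda_{i}<1$ and $\lambda_{i}>1$, where the two matrices have eigenvalues of absolute value $2\min(\lambda_{i},1)$ on each eigenvector of $B\theta$.
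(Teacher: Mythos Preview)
Your approach is exactly the paper's: the corollary is stated without proof, meant as the specialisation of the preceding theorem to real $A=B$, and the very next display is the Williamson-diagonalised form. Your factorisation $(\one+B\theta)\bigl(\one-\sqrt{c(B\theta)^{2}}\bigr)=U_{+}\bigl(|\one+B\theta|-|\one-B\theta|\bigr)$ via the functional calculus of the real-spectrum operator $B\theta$ is the clean way to phrase that specialisation.

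There is, however, a genuine point that your paragraph invoking $\det|M|=|\det M|=\prod_i|\mu_i|$ does not settle. That identity disposes of the \emph{outer} absolute value (and of the one in the theorem's formula), but the inner $|\one\pm B\theta|$ enter as \emph{matrices} in a difference, and the spectrum of a difference is not determined by the separate spectra of the summands. In fact the corollary is \emph{false} if the inner $|\cdot|$ is read as the ordinary operator modulus $\sqrt{M^{*}M}$: for $d=1$ and $B=\mathrm{diag}(a,b)$ with $0<ab<1$ one finds
\[
\det\bigl|\,|\one+B\theta|_{\mathrm{op}}-|\one-B\theta|_{\mathrm{op}}\,\bigr|
=\frac{4(a+b)^{2}}{4+(a-b)^{2}},
\]
whereas the theorem's expression gives $4ab$, and these coincide only when $a=b$. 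Hence $|\one\pm B\theta|$ in the corollary has to be read as $\sqrt{(\one\pm B\theta)^{2}}$ in the functional calculus of $B\theta$ (equivalently: compute after conjugating by $B^{1/2}$, which turns $B\theta$ into the genuinely Hermitian $B^{1/2}\theta B^{1/2}$). Your ``fallback'' via the $B^{1/2}$-conjugation or the Williamson block reduction is therefore not a secondary option but the actual content of the simplification; with that reading fixed, your argument is complete and coincides with what the paper intends.
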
	

                Thus, if we diagonalize simultaneously
                $B$ and $\omega$, as in the proof of
Prop. \ref{prio1}, then
\beq
\Tr|\Op(\e^{- B})| = \frac{\sqrt2}{4^d\mathop{\prod}\limits_{\lambda_i<1}\lambda_i}.\eeq

\section{Operator norm}

\begin{proposition}
  Let $B\in \Sym_\+(\bbR^{2d})$. Then
  \beq\big\|\Op(\e^{-B})\big\|=\frac{1}{\sqrt{\det(\one+\sqrt{B\theta B\theta})}}.\label{prodi1}\eeq
\end{proposition}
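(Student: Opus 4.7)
The plan is to reduce to the one-dimensional case via simultaneous diagonalization of $B$ and $\omega$, apply the explicit formula (\ref{propo}) mode by mode, and then reassemble. Since $B\in\Sym_\+(\bbR^{2d})$, the operator $\Op(\e^{-B})$ is Hermitian by Prop.~\ref{prio1}(2), so its norm equals the spectral radius.

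Following the argument used in the proof of Prop.~\ref{prio1}(3), one can find a symplectic basis in which $\bbR^{2d}\cong\bigoplus_{i=1}^d\bbR^2$, the matrix $\omega$ is the direct sum of the standard $2\times 2$ symplectic blocks, and $B$ is the direct sum of $\lambda_i\,\one_2$ with $\lambda_i>0$. After an appropriate metaplectic transformation (which is unitary and so preserves the operator norm),
\[
\Op(\e^{-B}) = \bigotimes_{i=1}^d\Op\bigl(\e^{-\lambda_i(x_i^2+p_i^2)}\bigr),\qquad
\bigl\|\Op(\e^{-B})\bigr\|=\prod_{i=1}^d\bigl\|\Op\bigl(\e^{-\lambda_i(x_i^2+p_i^2)}\bigr)\bigr\|.
\]

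Now I would compute each one-dimensional norm using (\ref{propo}). In each of the three regimes the spectrum of $H=\hat x^2+\hat p^2$ is $\{1,3,5,\dots\}$, so the largest absolute eigenvalue of $\Op(\e^{-\lambda(x^2+p^2)})$ is attained on the ground state $H=1$. A direct substitution gives, in all three regimes $0<\lambda<1$, $\lambda=1$, $\lambda>1$,
\[
\bigl\|\Op\bigl(\e^{-\lambda(x^2+p^2)}\bigr)\bigr\|=\frac{1}{1+\lambda}.
\]
(For $0<\lambda<1$ this is $(1-\lambda^2)^{-1/2}\sqrt{(1-\lambda)/(1+\lambda)}$; for $\lambda=1$ it is $2^{-1}$; for $\lambda>1$ it is $(\lambda^2-1)^{-1/2}\sqrt{(\lambda-1)/(\lambda+1)}$.)

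It remains to identify the product $\prod_i(1+\lambda_i)^{-1}$ with the right-hand side of (\ref{prodi1}). In the chosen basis, $\theta=\bigoplus_i\theta_i$ with $\theta_i^2=\one_2$, so $B\theta B\theta=\bigoplus_i\lambda_i^2\one_2$ and hence $\sqrt{B\theta B\theta}=\bigoplus_i\lambda_i\one_2$. Therefore
\[
\det\bigl(\one+\sqrt{B\theta B\theta}\bigr)=\prod_{i=1}^d(1+\lambda_i)^2,
\]
and taking the square root yields exactly $\prod_i(1+\lambda_i)$. The only non-routine step is verifying the single-mode norm formula in the non-positive regime $\lambda>1$, where one must remember that the norm equals the largest \emph{absolute} value of the eigenvalues of a self-adjoint operator; the alternating-sign prefactor $(-1)^{(H-d)/2}$ in (\ref{propo}) does not affect this.
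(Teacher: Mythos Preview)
Your proof is correct and follows essentially the same route as the paper: simultaneous diagonalization of $B$ and $\omega$ as in Prop.~\ref{prio1}(3), reduction to a tensor product via a metaplectic (unitary) transformation, the single-mode computation $\|\Op(\e^{-\lambda(x^2+p^2)})\|=(1+\lambda)^{-1}$ from (\ref{propo}), and identification of $\prod_i(1+\lambda_i)^{-1}$ with the determinant expression. You have simply been more explicit than the paper in verifying the single-mode norm across all three regimes and in spelling out the block computation of $\sqrt{B\theta B\theta}$.
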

\begin{proof} First, using
(\ref{propo}),
we check that  in the case of one degree of freedom we have
\beq\big\|\Op\big(\e^{-\lambda(x^2+p^2)}\big)\big\|=
\frac{1}{1+\lambda}.\eeq
An arbitrary $B$ we can diagonalize together with $\theta$, as in
the proof of Prop. \ref{prio1} (3),
and then we obtain
\beq
\big\|\Op(\e^{-B})\big\|=\prod_{i=1}^d\frac{1}{1+\lambda_i}.\label{prodi}\eeq
Now the rhs of (\ref{prodi}) can be rewritten as the rhs of
(\ref{prodi1}). \end{proof}

Using \ref{meta1}, we obtain an identity for an arbitrary element of the oscillator semigroup. A closely related result is described in Thm 5.2 of \cite{V}.

\begin{theorem}
  Let $A\in \Sym_\+(\bbC^{2d})$. Then
  \begin{align}
    \big\|\sqrt{\det(\one+A\theta)}\Op(\e^{-A})\big\|&\notag\\
   &\hspace{-15ex} =\frac{
    \sqrt{\det\Big(\one+c\big(\sqrt{c(\bar A\theta)c(A\theta)}\big)\Big)}}
               {\sqrt{\det\Big(\one+\sqrt{c
                   \Big(\sqrt{c(\bar A\theta)c(A\theta)}\Big)
                   c\Big(\sqrt{c(\bar A\theta)c(A\theta)}\Big)}\Big)}}
               .\label{prodi2}\end{align}
\end{theorem}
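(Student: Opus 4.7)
The plan is to combine the polar decomposition formula \eqref{meta1} with the preceding proposition that computes the operator norm for real-symbol Gaussians. The key observation is that the operator norm is unchanged under multiplication by a unitary, so $\|V\|=\||V|\|$ for any bounded $V$. Applying this to $V=\sqrt{\det(\one+A\theta)}\Op(\e^{-A})$ and using \eqref{meta1}, I would write
\beq
\bigl\|\sqrt{\det(\one+A\theta)}\Op(\e^{-A})\bigr\|
=\bigl\|\sqrt{\det(\one+B\theta)}\Op(\e^{-B})\bigr\|,
\eeq
where $B=c\bigl(\sqrt{c(\bar A\theta)c(A\theta)}\bigr)\theta$. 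By Theorem \ref{AtA1} combined with the construction of $B$ in \eqref{pros2}, we have $B\in\Sym_\p^\qnd(\bbR^{2d})$.

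Next I would pull the scalar prefactor out of the norm. Since $B\in\Sym_\p^\qnd(\bbR^{2d})$, Proposition \ref{symp-in-sym+} gives $\sigma(B\theta)\subset{]-1,1[}$, so all eigenvalues of $\one+B\theta$ lie in $]0,2[$; in particular $\det(\one+B\theta)>0$ and the positive square root is unambiguous. Hence
\beq
\bigl\|\sqrt{\det(\one+B\theta)}\Op(\e^{-B})\bigr\|
=\sqrt{\det(\one+B\theta)}\,\bigl\|\Op(\e^{-B})\bigr\|.
\eeq

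Now I would apply the preceding proposition for real Gaussian symbols, which yields
\beq
\bigl\|\Op(\e^{-B})\bigr\|=\frac{1}{\sqrt{\det(\one+\sqrt{B\theta B\theta})}}.
\eeq
Substituting the explicit expression $B\theta=c\bigl(\sqrt{c(\bar A\theta)c(A\theta)}\bigr)$ everywhere, the numerator becomes $\sqrt{\det\bigl(\one+c(\sqrt{c(\bar A\theta)c(A\theta)})\bigr)}$ and the denominator becomes the square root of $\det\bigl(\one+\sqrt{c(\sqrt{c(\bar A\theta)c(A\theta)})\,c(\sqrt{c(\bar A\theta)c(A\theta)})}\bigr)$, matching \eqref{prodi2} exactly.

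There is essentially no obstacle beyond bookkeeping: the serious analytic content (the polar decomposition in the oscillator semigroup, and the operator norm on real Gaussian symbols) has already been proven. The only subtlety to flag in the write-up is justifying that no absolute-value bars are needed on $\det(\one+B\theta)$ and $\det(\one+\sqrt{(B\theta)^2})$; this follows because $B\in\Sym_\p^\qnd(\bbR^{2d})$ forces $\one+B\theta$ and $\one+\sqrt{(B\theta)^2}$ to have spectrum in $]0,2]$, so both determinants are strictly positive.
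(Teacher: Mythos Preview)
Your proposal is correct and follows essentially the same approach as the paper, which simply says ``Using \eqref{meta1}'' and leaves the details implicit; you have filled in exactly those details (polar decomposition reduces to the real case, then apply the preceding proposition \eqref{prodi1}). The only point worth noting is that both your argument and the paper's implicitly require $A\in\Sym_\+^\qnd(\bbC^{2d})$ so that $c(A\theta)$ and \eqref{meta1} are available, whereas the theorem as stated allows general $A\in\Sym_\+(\bbC^{2d})$; in the non-qnd case the left side vanishes and the right side is undefined, so this is a defect of the statement rather than of your proof.
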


        \section{One degree of freedom}

In the case of one degree of freedom we have a complete characterization of quantum nondegenerate symmetric matrices.
        
        \bet \label{qnd2} Let $A\in\Sym(\bbC^2)$. Then
        $A\in\Sym^\qnd(\bbC^2)$ iff $\det A\ne1$.
        \eet
        \proof
        We easily compute that for
        $A\in\Sym(\bbC^2)$,
        \beq
        \det(\one+A\theta)=1-\det A.\eeq
        \qed

        Next we describe the quantum degenerate case for one degree of freedom
        on the level of the oscillator group.

        \bet
        Elements of $\Osc_\+(\bbC^2)$ that
are not proportional to an element of $\Osc_\+^\nor(\bbC^2)$ 
are proportional to a projection. They
have the integral kernel of the form
		\begin{equation}
		 c\e^{-(ax^2+by^2)},
\label{pro3}		\end{equation}
where $a,b,c\in\bbC$, $\Re a,\Re b>0$. The Weyl symbol of the operator with the kernel (\ref{pro3}) is
        \beq
	  c\frac{2\sqrt{\pi}}{\sqrt{a+b}}\,\e^{- A},\eeq
          where
          \beq
          A=\frac{1}{(a+b)}\begin{bmatrix}
          4ab&\i(-a+b)\\
          \i(-a+b)&1\end{bmatrix}.\label{pron4}\eeq
          Matrices of the form (\ref{pron4}) with
           $\Re a,\Re b>0$ are precisely all matrices in
          \beq
          \Sym_\+(\bbC^2)\backslash
        \Sym_\+^\qnd(\bbC^2).\eeq

\eet

\section{Application to the boundedness of pseudo-differential operators}




Cordes proved the following result  \cite{C}:

\begin{theorem} 
  Suppose $a\in\mathscr{S}'(\bbR^d\times \bbR^d)$ and $s>\frac{d}{2}$.  
  Then there exists a constant $c_{d,s}$ such that
  \begin{equation}
    \| \Op(a)\|\leqslant c_{d,s}
    \|(1-\Delta_x)^s(1-\Delta_p)^sa\|_\infty.
\end{equation}
\label{cvai}\end{theorem}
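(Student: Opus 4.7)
The plan is to reduce the operator-norm bound for $\Op(a)$ to the trace-norm formula for Weyl quantizations of real Gaussians (Corollary \ref{tracenormreal}), via a duality argument built on the Heisenberg--Weyl resolution of identity combined with the subordination representation of the Bessel kernel.

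First I would set $b:=(1-\Delta_x)^s(1-\Delta_p)^s a$, so that $\|b\|_\infty$ is the right-hand side of the desired inequality, and write $a=K_s*b$ with $K_s(x,p):=g_s(x)g_s(p)$, where $g_s$ is the Bessel kernel on $\bbR^d$ characterized by $\widehat{g_s}(\xi)=(1+|\xi|^2)^{-s}$. The standard fact that translation of a Weyl symbol by $y_0$ is implemented by conjugation with the Heisenberg--Weyl unitary $W(y_0)$ then gives
\[
\Op(a)=\int_{\bbR^{2d}}b(y_0)\,W(y_0)\,\Op(K_s)\,W(y_0)^{*}\,\d y_0.
\]

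Next I would spectrally decompose $\Op(K_s)=\sum_n\lambda_n|f_n\rangle\langle f_n|$ and, for unit vectors $\phi,\psi\in L^2(\bbR^d)$, pair the above identity with $\phi$ and $\psi$ to get
\[
|\langle\phi,\Op(a)\psi\rangle|\leq\|b\|_\infty\sum_n|\lambda_n|\int_{\bbR^{2d}}|\langle\phi,W(y_0)f_n\rangle|\,|\langle W(y_0)f_n,\psi\rangle|\,\d y_0.
\]
Cauchy--Schwarz on the inner integral, combined with the Weyl-system resolution of identity $\int|W(y_0)f\rangle\langle W(y_0)f|\,\d y_0=(2\pi)^d\|f\|^2\,\one$, bounds each summand by $(2\pi)^d|\lambda_n|$, so that
\[
\|\Op(a)\|\leq(2\pi)^d\,\|\Op(K_s)\|_1\,\|b\|_\infty,
\]
and the problem collapses to controlling the trace norm of $\Op(K_s)$. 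For that, I would invoke the subordination representation $g_s(x)=\tfrac{1}{(4\pi)^{d/2}\Gamma(s)}\int_0^\infty t^{s-d/2-1}\e^{-t-|x|^2/(4t)}\,\d t$, which exhibits $K_s$ as a double integral of the Gaussian symbols $G_{t_1,t_2}(x,p):=\exp(-|x|^2/(4t_1)-|p|^2/(4t_2))$. The symplectic eigenvalues of $G_{t_1,t_2}$ are all equal to $1/(4\sqrt{t_1t_2})$ with multiplicity $d$, so Corollary \ref{tracenormreal} gives $\|\Op(G_{t_1,t_2})\|_1=\max\{(2\sqrt{t_1t_2})^d,2^{-d}\}$. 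The triangle inequality for the trace norm then reduces everything to the explicit scalar integral
\[
\int_0^\infty\!\!\int_0^\infty (t_1t_2)^{s-d/2-1}\e^{-t_1-t_2}\max\{(2\sqrt{t_1t_2})^d,2^{-d}\}\,\d t_1\,\d t_2.
\]

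The main obstacle --- and the sole place where the hypothesis $s>d/2$ plays a role --- is the convergence of this double integral near the axes. In the regime $\sqrt{t_1t_2}\geq 1/4$ the integrand is bounded by $2^d(t_1t_2)^{s-1}\e^{-t_1-t_2}$, which contributes at most $2^d\Gamma(s)^2$ for any $s>0$. In the complementary regime $\sqrt{t_1t_2}<1/4$ --- precisely the quantum-degenerate region emphasized earlier in the paper, where the quantization of a classically narrow Gaussian becomes a mixed state of \emph{bounded} trace norm --- the trace norm saturates at $2^{-d}$, and the integrand reduces to $2^{-d}(t_1t_2)^{s-d/2-1}\e^{-t_1-t_2}$, which is integrable as $t_1\to 0$ (or $t_2\to 0$) exactly when $s>d/2$. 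Combining the two estimates produces an explicit $c_{d,s}$ proportional to $(2\pi)^d\Gamma(s)^{-2}$ times an elementary function of $s$ and $d$, finite iff $s>d/2$.
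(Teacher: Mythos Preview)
Your argument is correct and essentially identical to the paper's: both write $a=K_s*b$ with $K_s$ the Bessel kernel, use Heisenberg--Weyl translation covariance together with the resolution of identity to reduce to $\|\Op(a)\|\lesssim\|\Op(K_s)\|_1\|b\|_\infty$ (the paper packages this step as Proposition~\ref{esti9}), and then control $\|\Op(K_s)\|_1$ via the Schwinger/subordination representation of the Bessel kernel combined with the Gaussian trace-norm formula. The only cosmetic difference is that you cite Corollary~\ref{tracenormreal} where the paper cites \eqref{propo2} directly, and you spell out the spectral decomposition of $\Op(K_s)$ explicitly rather than abstracting it into a separate lemma.
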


The above result can be called the \emph{Calder\'{o}n and Vaillancourt Theorem for the Weyl quantization}. (The original result of Calder\'{o}n and Vaillancourt \cite{CV} concerned the $x-p$ quantization, known also as the standard or Kohn-Nirenberg quantization). 

Note that Theorem \ref{cvai} is not optimal with respect to the number of derivatives. 	The optimal bound on the number of derivatives for the Weyl quantization is $s>\frac{d}{4}$. It was discovered by A. Boulkhemair \cite{Bo} and it requires a different proof than the one developed by Cordes.

In what follows we will describe  a proof of Theorem \ref{cvai}
which gives an estimate of  $c_{d,s}$.
We  will follow the ideas of Cordes and Kato
(\cite{C} and \cite{K}), who however do not give  an explicit bound on the constant  $c_{d,s}$.
The estimate (\ref{propo2}) for the trace norm of operators with Gaussian symbols
 plays an important role in our proof.

We start with the following proposition.
\begin{proposition}
For $s>\frac{d}{2}$, define the functions
\begin{align}
\psi_s(\xi)&: = (2\pi)^{-d}\int\mathrm{d}\zeta\, (1+\zeta^2)^{-s}\e^{\i\zeta \xi},\\
P_s(x,p)&:=\psi_s(x)\psi_s(p).
\end{align}
Then $\Op(P_s)$ is trace class and
\beq\Tr\Big|\Op(P_s)\Big|\leq\frac{\Gamma(s)^2+\Gamma(s-\frac{d}{2})^2}{(2\pi)^d
  \Gamma(s)^2}.\label{esti}\eeq
\label{esti1}
\end{proposition}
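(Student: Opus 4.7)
The plan is to represent $\psi_s$ as a subordination integral of Gaussians, so that $\Op(P_s)$ becomes a superposition of Gaussian quantizations to which (\ref{propo2}) applies.

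Applying the Laplace representation $(1+\zeta^2)^{-s} = \Gamma(s)^{-1}\int_0^\infty t^{s-1}\e^{-t(1+\zeta^2)}\,\d t$ and exchanging integrals, the $\zeta$-integral in the definition of $\psi_s$ becomes a $d$-dimensional Gaussian and evaluates to $(\pi/t)^{d/2}\e^{-\xi^2/(4t)}$, giving
\[
\psi_s(\xi)=\frac{\pi^{d/2}}{(2\pi)^d\Gamma(s)}\int_0^\infty t^{s-d/2-1}\e^{-t-\xi^2/(4t)}\,\d t.
\]
Multiplying two copies together expresses $P_s(x,p)$ as a double integral over $(t_1,t_2)\in(0,\infty)^2$ of the anisotropic Gaussian $\e^{-x^2/(4t_1)-p^2/(4t_2)}$ with positive weight. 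By linearity of $\Op$ and the triangle inequality for the trace norm, the problem reduces to bounding the trace norm of each such Gaussian quantization.

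The symbol $\e^{-x^2/(4t_1)-p^2/(4t_2)}$ is a tensor product in the $d$ spatial/momentum coordinates, and in each of the $d$ factors a symplectic dilation $(x_i,p_i)\mapsto(\sigma x_i,p_i/\sigma)$ with $\sigma^4=t_1/t_2$ (implemented by a metaplectic unitary, which preserves the trace norm) brings the symbol into isotropic form $\e^{-\lambda(x_i^2+p_i^2)}$ with $\lambda=(4\sqrt{t_1t_2})^{-1}$. Formula (\ref{propo2}) then yields
\[
\Tr\big|\Op(\e^{-x^2/(4t_1)-p^2/(4t_2)})\big|=\max\!\bigl(2^d(t_1t_2)^{d/2},\,2^{-d}\bigr)\le 2^d(t_1t_2)^{d/2}+2^{-d}.
\]

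Plugging this in and using Fubini, the resulting double integrals factor into products of one-dimensional Gamma integrals $\int_0^\infty t^{s-1}\e^{-t}\d t=\Gamma(s)$ and $\int_0^\infty t^{s-d/2-1}\e^{-t}\d t=\Gamma(s-d/2)$, the latter being finite precisely because $s>d/2$. Collecting the overall prefactor $\bigl(2^d(2\pi)^d\Gamma(s)^2\bigr)^{-1}$ and bounding the extra factor $2^{-2d}\le 1$ appearing in the second term yields (\ref{esti}). The key insight is that subordination reduces the estimate to the elementary Gaussian trace norm (\ref{propo2}); after that, the argument is just Gamma-function bookkeeping together with keeping track of the $\pi^{d/2}$ from the $\zeta$-integral, the $(2\pi)^{-d}$ in the definition of $\psi_s$, and the powers of $2$ in (\ref{propo2}). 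The hypothesis $s>d/2$ enters only through the convergence of $\Gamma(s-d/2)$, which also legitimises both the interchange of integration orders and the passage of the trace norm inside the $(t_1,t_2)$-integral.
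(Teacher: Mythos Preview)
Your proof is correct and follows essentially the same route as the paper: Schwinger parametrization of $(1+\zeta^2)^{-s}$, reduction to the Gaussian trace norm formula (\ref{propo2}), and evaluation via Gamma integrals. The only cosmetic difference is that the paper splits the $(u,v)$-domain into two regions and then extends each integral to all of $(0,\infty)^2$, whereas you bound $\max\leq\text{sum}$ directly; your variant is slightly cleaner and even produces the intermediate bound $\frac{\Gamma(s)^2+2^{-2d}\Gamma(s-\frac{d}{2})^2}{(2\pi)^d\Gamma(s)^2}$ before discarding the $2^{-2d}$.
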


\proof
Let us use the so-called Schwinger parametrization
	\begin{equation}
	X^{-s} = \frac{1}{\Gamma(s)}\int_0^\infty \e^{-tX}t^{s-1}\,\mathrm{d}t
	\end{equation}
	to get
	\begin{eqnarray}
	\psi_s(\xi) &=&  \frac{1}{\Gamma(s)(2\pi)^{d}}\int_0^\infty\mathrm{d}t\int\mathrm{d}\zeta\, \e^{-t(1+\zeta^2)}t^{s-1}\e^{\i\zeta \xi}\nonumber\\
	&=& \frac{1}{\pi^{\frac{d}{2}}2^d\Gamma(s)}\int_0^\infty \mathrm{d}t\,t^{s-\frac{d}{2}-1}
        \e^{-t-\frac{\xi^2}{4t}}.
	\end{eqnarray}

        Now
        \beq
        P_s(x,p)
        =\frac{1}{\pi^d2^{2d}\Gamma^2(s)}\int_0^\infty\d u
        \int_0^\infty\d v\e^{-u-v-\frac{x^2}{4u}-\frac{p^2}{4v}}(uv)^{s-\frac{d}{2}-1}
        .\eeq
	By  (\ref{propo2}), we have
        \beq\Tr\left|\Op\big(\e^{-\alpha x^2-\beta p^2)}\big)\right|=\begin{cases}
\frac{1}{(2\sqrt{\alpha\beta})^{d}},&\alpha\beta \leq1,\\
\frac{1}{2^d},&1\leq \alpha\beta 
.\end{cases}\label{propo2a}\eeq
        Hence,
        \begin{align}
          &\Tr\Big|\Op(P_s)\Big|          \\
          \leq&
\frac{1}{2^{2d}\pi^d\Gamma^2(s)}\int_0^\infty\d u
\int_0^\infty\d v\e^{-u-v}\Tr\Big|\Op\Big(
\e^{-  \frac{x^2}{4u}-\frac{p^2}{4v}}\Big)\Big|(uv)^{s-\frac{d}{2}-1}
\\
\leq&
\frac{1}{2^{d}\pi^d\Gamma^2(s)}
\Bigg(\underset{4\leq uv,\ u,v>0}{\int\hspace{-1ex}\d u\hspace{-1ex}\int\hspace{-1ex}\d v}
\e^{-u-v}
(uv)^{s-1}
+
\underset{uv\leq 4,\ u,v>0}{\int\hspace{-1ex}\d u\hspace{-1ex}\int\hspace{-1ex}\d v}
\e^{-u-v}
(uv)^{s-\frac{d}{2}-1}\Bigg)\label{esti2}\\
\leq&\frac{\Gamma(s)^2+\Gamma(s-\frac{d}{2})^2}{2^{d}\pi^d\Gamma^2(s)}.
\end{align}
\qed

\begin{proposition}
  Let $B$ be a self-adjoint trace class operator and $h\in L^\infty(\bbR^{2d})$. Then
  \beq
  C:=\frac{1}{(2\pi)^d}\int\d y\int\d w h(y,w)\e^{-\i y\hat p+\i w\hat x}
  B\e^{\i y\hat p-\i w\hat x}\eeq
  is bounded and
  \beq
  \|C\|
  \leq \Tr|B|\, \|h\|_\infty.\eeq\label{esti9}\end{proposition}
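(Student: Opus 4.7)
The plan is to reduce to the case where $B$ is a rank-one projection via the spectral theorem, and then estimate the resulting operator using the orthogonality relations of Weyl displacement operators.

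Since $B$ is self-adjoint and trace class, I would write $B=\sum_k b_k P_k$ with $b_k\in\bbR$, $P_k$ the orthogonal projection onto $\bbC\phi_k$ for an orthonormal family $(\phi_k)$, and $\Tr|B|=\sum_k|b_k|$. Setting $U(y,w):=\e^{-\i y\hat p+\i w\hat x}$ and
\[
C_k:=\frac{1}{(2\pi)^d}\int\d y\int\d w\, h(y,w)\,U(y,w)\,P_k\,U(y,w)^*,
\]
we have $C=\sum_k b_k C_k$, so once $\|C_k\|\leq\|h\|_\infty$ is established for every $k$, the triangle inequality finishes the argument: $\|C\|\leq\sum_k|b_k|\|h\|_\infty=\|h\|_\infty\Tr|B|$.

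To bound $\|C_k\|$, I would compute, for unit vectors $\psi_1,\psi_2\in L^2(\bbR^d)$,
\[
\langle\psi_1,C_k\psi_2\rangle=\frac{1}{(2\pi)^d}\int\d y\int\d w\, h(y,w)\,\langle\psi_1,U(y,w)\phi_k\rangle\,\overline{\langle\psi_2,U(y,w)\phi_k\rangle},
\]
and then apply $|h|\leq\|h\|_\infty$ together with the Cauchy-Schwarz inequality in $L^2(\bbR^{2d})$:
\[
|\langle\psi_1,C_k\psi_2\rangle|\leq\frac{\|h\|_\infty}{(2\pi)^d}\prod_{j=1}^{2}\Big(\int|\langle\psi_j,U(y,w)\phi_k\rangle|^2\d y\,\d w\Big)^{1/2}.
\]

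The key ingredient is the Plancherel-type identity for Weyl displacement operators,
\[
\int|\langle\psi,U(y,w)\phi\rangle|^2\d y\,\d w=(2\pi)^d\|\psi\|^2\|\phi\|^2,
\]
which I would verify by observing, via Baker-Campbell-Hausdorff, that $(U(y,w)\phi)(x)=\e^{\i wx-\i yw/2}\phi(x-y)$; for fixed $y$, the function $w\mapsto\langle\psi,U(y,w)\phi\rangle$ equals a unimodular phase factor times the Fourier transform of $x\mapsto\bar\psi(x)\phi(x-y)$, so ordinary Plancherel in $w$ followed by Fubini and a translation in $y$ produces exactly $(2\pi)^d\|\psi\|^2\|\phi\|^2$. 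Substituting $\|\psi_j\|=\|\phi_k\|=1$ yields $|\langle\psi_1,C_k\psi_2\rangle|\leq\|h\|_\infty$, whence $\|C_k\|\leq\|h\|_\infty$ by the variational characterization of the operator norm. No serious obstacle is anticipated beyond a routine justification of Fubini when assembling the series $\sum_k b_k C_k$ (handled by the absolute summability of $(b_k)$ and the uniform bound on $\|C_k\|$) and the standard orthogonality identity above.
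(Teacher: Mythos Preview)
Your proof is correct and follows essentially the same route as the paper's: the paper packages your Plancherel-type identity as the statement that the map $T_\Phi\Theta(y,w):=(2\pi)^{-d/2}(\Phi\mid\e^{\i y\hat p-\i w\hat x}\Theta)$ is an isometry $L^2(\bbR^d)\to L^2(\bbR^{2d})$, and then writes the rank-one contribution directly as the factorization $T_\Phi^* h T_\Psi$, which is exactly your matrix-element Cauchy--Schwarz argument in operator form. The only cosmetic difference is that the paper decomposes $B$ via its singular value decomposition $B=\sum_i\lambda_i|\Phi_i)(\Psi_i|$ rather than the spectral theorem, but under the self-adjointness hypothesis both give $\sum_i\lambda_i=\Tr|B|$ and the rest is identical.
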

\proof
For $\Phi\in L^2(\bbR^d)$, $\|\Phi\|=1$, define
$T_\Phi:L^2(\bbR^d)\to L^2(\bbR^{2d})$ by
\beq T_\Phi\Theta(y,w):=(2\pi)^{-\frac{d}{2}}(\Phi|
\e^{\i y\hat p-\i w\hat x}\Theta),\quad\Theta\in L^2(\bbR^{2d}).\eeq
We check that $T_\Phi$ is an isometry. This implies that for $\Phi,\Psi\in L^2(\bbR^d)$ of norm one
  \beq
  \frac{1}{(2\pi)^d}\int\d y\int\d w h(y,w)\e^{-\i y\hat p+\i w\hat x}
  |\Phi)(\Psi|\e^{\i y\hat p-\i w\hat x}\label{proi}\eeq
  is bounded and its norm is less than $\|h\|_\infty$. Indeed,
  (\ref{proi}) can be written as the product of three operators
  \beq T_{\Phi}^* hT_{\Psi},\eeq
  where $h$ is meant to be the operator of the multiplication  by the function
  $h$ on the space $L^2(\bbR^{2d})$. Now it suffices to write
  \beq B=\sum_{i=1}^\infty \lambda_i|\Phi_i)(\Psi_i|,\eeq
  where $\Phi_i,\Psi_i$ are normalized, $\lambda_i\geq0$ and $\Tr| B|=\sum\limits_{i=1}^\infty\lambda_i$.  \qed
  
\noindent{\it Proof of Theorem \ref{cvai}.}  Set
\beq
h:=(1-\Delta_x)^s(1-\Delta_p)^sa.\eeq
Then
\begin{align}
  a(x,p)&=(1-\Delta_x)^{-s}(1-\Delta_p)^{-s}h(x,p)\\
  &=\int\d y\int\d w\, P_s(x-y,p-w)h(y,w).
\end{align}
Hence
\begin{align}
  \Op(a)&=\int\d y\int\d w\, \Op\big(P_s(x-y,p-w)\big)h(y,w)\\
  &=\frac{1}{(2\pi)^d}\int\d y\int\d w\, h(y,w)\e^{-\i y\hat p+\i w\hat x}
  \Op(P_s)\e^{\i y\hat p-\i w\hat x}.
\end{align}
Therefore,
by Proposition \ref{esti9},
\beq
\|\Op(a)\|\leq\Tr\big|\Op(P_s)\big|\|h\|_\infty.\eeq
Thus we can set
\beq c_{d,s}=\Tr\big|\Op(P_s)\big|,\eeq
which is finite by Proposition \ref{esti1}.
\qed

Proposition \ref{esti1} yields an explicit estimate for $c_{d,s}$ given by
the rhs of (\ref{esti}). Actually, in the proof of
Proposition \ref{esti1} we have an even  better, although more complicated explicit estimate given by
(\ref{esti2}).

\section{Complex symplectic Lie algebra}

The well known \emph{symplectic Lie algebra} in dimension $2d$ is defined
as the set of $R\in L(\bbR^{2d})$ satisfying
          \beq R^\t\omega+ \omega R=0.\label{insympl}\eeq
Similarly, the set of
$R\in L(\bbC^{2d})$ satisfying (\ref{insympl}) is called the
\emph{complex symplectic Lie algebra} in dimension $2d$ and denoted
 $sp(\bbC^{2d})$. As usual in the complex case, we  usually  prefer to replace
          $\omega$ in (\ref{insympl}) with $\theta$.

We define
\begin{align}
  sp_+(\bbC^{2d})&:=\{D\in sp(\bbC^{2d})\mid
  D^*\theta+\theta D\geq0\},\\
    sp_\+(\bbC^{2d})&:=\{D\in sp(\bbC^{2d})\mid
    D^*\theta+\theta D>0\}.
    \end{align}  
We also introduce
                    \begin{eqnarray}
sp_\h(\bbC^{2d})&:=&\{D\in
sp(\bbC^{2d}) \mid \bar D=-D\},\\
                       sp_{\mathrm{p}}(\bbC^{2d})&:=&\{D\in
 sp_{\h}(\bbC^{2d}) \mid \theta D >0\}.\end{eqnarray}

\bep \ben\item Let $D\in sp(\bbC^{2d})$. Then $\e^{-D}\in  Sp(\bbC^{2d})$. 
\item Let $D\in sp_\+(\bbC^{2d})$. Then $\e^{-D}\in  Sp_\+(\bbC^{2d})$.
\item Let
  $D\in sp_\h(\bbC^{2d})$. Then $\e^{-D}\in Sp_\h(\bbC^{2d})$.
\item             Let $D\in sp_{\mathrm{p}}(\bbC^{2d})$. Then $\e^{-D}\in
Sp_{\mathrm{p}}(\bbC^{2d})$.
\een\label{pru}\eep

\proof 
(1) and (3) are obvious corollaries from the definitions.

(2): Integrating
\beq\frac{\d}{\d t}(\e^{-tD})^*\theta\e^{-tD}=-(\e^{-tD})^*(D^*\theta+\theta D)\e^{-tD}<0\eeq
    we obtain $(\e^{-D})^*\theta\e^{-D}<\theta.$

(4): We can write
 \[\e^{-D}=\e^{-\theta(\theta D)}.\]
 We diagonalize simultaneously the positive form $\theta D$ and $\theta$.
 In the diagonalizing basis, the matrices $\theta$ and $\theta D$ commute, the former has eigenvalues $\pm1$, the latter has positive eigenvalues.  Hence $\e^{-D}$ has positive eigenvalues.
 \qed

 \section{Hamiltonians}

 Let $H\in \Sym(\bbC^{2d})$. As usual, the quadratic form  $\bbR^{2d}\ni y\mapsto y^\t Hy\in \bbC$ will be also denoted by $H$.
Let us briefly recall the properties of quantum quadratic Hamiltonians $\Op(H)$ and their relationship to the metaplectic group. We will use \cite{DG} as the basic reference, although most of these facts are well known.

 Set \beq
 D:=2H\omega^{-1}. \label{gene}
 \eeq
 Clearly, $D\in sp(\bbC^{2d})$. We will say that $D$ is the \emph{symplectic generator associated with the Hamiltonian $H$}.

First assume that
 $H\in\Sym(\bbR^{2d})$. It is well known that then $\Op(H)$ is essentially
 self-adjoint on $\cS(\bbR^d)$ (see e.g. \cite{DG} Thm 10.21).
 Moreover,  $\e^{\i t\Op(H)}\in Mp(\bbR^{2d})$
 (see e.g. \cite{DG} Thm 10.36). Under the epimorphism \ref{epi1}, 
 $\e^{\i t\Op(H)}$ is mapped onto $\e^{tD}$, where $D\in sp(\bbR^{2d})$ is defined by (\ref{gene}) (see e.g. \cite{DG} Thm 10.22).
 Finally, if $\e^{tD}\in Sp^\reg(\bbR)$ and $C_t:=c(\e^{tD})\omega^{-1}$,
 \beq \e^{\i t\Op(H)}=\sqrt{\det(1+C_t\omega)}\Op(\e^{-\i C_t}), \eeq
 see e.g. \cite{DG} Thm 10.35.

 Next consider
 $H\in\Sym_\+(\bbC^{2d})$.
 It is easy to show that $\Op(H)$
 extends from $\cS(\bbR^d)$ to a maximal accretive operator (see e.g. \cite{DG}, Thm 10.21). Moreover,
 $\e^{- t\Op(H)}\in \Osc_\+^\nor(\bbC^{2d})$. In fact, if $D$
 is defined as in (\ref{gene}), then $-\i D\in sp_\+(\bbC^{2d})$, and hence
 by Prop. \ref{pru} (2),
 $\e^{\i tD}\in Sp_\+(\bbC^{2d})$. Moreover, 
 under the epimorphism (\ref{epi2}),
 $\e^{- t\Op(H)}$ is mapped onto $\e^{\i tD}$. Finally,
if we set $A_t:=c(\e^{\i tD})\theta$, then
 \beq
 \e^{-t\Op(H)}=\sqrt{\det(\one+A_t\theta)}\Op(\e^{-A_t}),\eeq
see e.g. in \cite{DG} Thm 10.35.

   \section{Holomorphic 1-parameter subsemigroups}

   Let $H\in \Sym_\+(\bbC^{2d})$. As we recalled above, $\Op(H)$ is maximally accretive, and hence
   \beq[0,\infty[\ni t\mapsto \e^{-t\Op(H)}\eeq
         is a well defined subsemigroup of $\Osc_\+(\bbC^{2d})$. One can ask whether it can be extended to a larger subsemigroup if we replace real $t$ with a complex parameter.

         If  $H$ is real, then the answer is obvious and simple. Then $\Op(H)$ is a positive self-adjoint operator and we have a well defined semigroup
         \beq\{z\in \bbC\mid \Re z\geq0\}\ni z\mapsto \e^{-z\Op(H)}\label{prau}\eeq
         inside $\Osc_+(\bbC^{2d})$.
For  $ \Re z>0$, (\ref{prau}) is in
$\Osc_\+(\bbC^{2d})$.

If $H$ is not real, then the answer can be more complicated.

Let $D\in sp_\+(\bbC^{2d})$
correspond to $H$ as in (\ref{gene}). Clearly
   \beq
\bbC\ni   z\mapsto\e^{\i zD}\in Sp(\bbC^{2d})\eeq
is a holomorphic subgroup of $Sp(\bbC^{2d})$. However,
not all  elements of the complex symplectic group correspond to (bounded) operators on the Hilbert space. 
        Motivated by this, we define
        \begin{align}
          \cA_+(H)&:=\{z\in\bbC\mid\e^{\i zD}\in Sp_+(\bbC^{2d})\},\\
                    \cA_\+(H)&:=\{z\in\bbC\mid\e^{\i zD}\in Sp_\+(\bbC^{2d})\}.\end{align}
        From the definition it is obvious that
        $\cA_+(H)$ is a closed subsemigroup of $\bbC$ and
                $\cA_\+(H)$ is an open subsemigroup of $\cA_+(H)$.

        If $z\in \cA_\+(H)$, then we  define
        \begin{align}
          A_z&:=c(\e^{\i z D})\theta\in\Sym_\+(\bbC^{2d}),\\
          \e^{-z\Op(H)}&:=\sqrt{\det(\one+ A_z\theta)}\Op\big(\e^{-A_z}\big).\label{expon}\end{align}
        (The definition of (\ref{expon}) is consistent with the usual definition of $\e^{-z\Op(H)}$ for real positive $z$).

         The shapes of $\cA_+(H)$ and $\cA_\+(H)$ can be
         quite curious. This is already seen in the simplest nontrivial example, known under the name of the \emph{Davies harmonic oscillator}, as shown in \cite{AV}, see also \cite{V}. In this example, $\psi\in]-\frac{\pi}{2},\frac{\pi}{2}[$ is a parameter, the classical and quantum Hamiltonians and the generator are
       \begin{align}
         H_\psi&:=\e^{\i\psi} x^2+\e^{-\i\psi} p^2,\\
   \hat H_\psi:=        \Op(H_\psi)&=\e^{\i\psi} \hat x^2+\e^{-\i\psi} \hat p^2,\\
                  D_\psi&:=2\begin{bmatrix}0&-\e^{\i\psi}\\\e^{-\i\psi}&0\end{bmatrix}.
       \end{align}
The proposition below reproduces the result of Aleman and Viola (see (1.2) of \cite{AV}).
       \begin{proposition}
       	Let $H_\psi$ be the Davies' harmonic oscillator, as above. Then:
       	\begin{equation}
       	\cA_+(H_\psi) = \{z\in\bbC\ |\ \Re(z)\geq 0 \mbox{ and } |\arg \tanh z| + |\psi| \leqslant\frac{\pi}{2}\},
       	\end{equation}
       	\begin{equation}
       	\cA_\+(H_\psi) = \{z\in\bbC\ |\ \Re(z)>0 \mbox{ and } |\arg \tanh z| + |\psi| <\frac{\pi}{2}\}.
       	\end{equation}
       \end{proposition}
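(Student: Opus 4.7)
The plan is to translate $\e^{\i zD_\psi}\in Sp_+(\bbC^{2})$ into a scalar inequality on $z$ and then rearrange it into the stated form. The key starting observation is that $D_\psi^{2}=-4\,\one$, so the Hermitian matrix
\[
M:=\tfrac{\i}{2}D_\psi=\begin{bmatrix}0&-\i\e^{\i\psi}\\ \i\e^{-\i\psi}&0\end{bmatrix}
\]
satisfies $M^{2}=\one$, and functional calculus gives $R_z:=\e^{\i zD_\psi}=\cosh(2z)\,\one+\sinh(2z)\,M$.

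Next I would compute $R_z^{*}\theta R_z-\theta$ as a Hermitian $2\times 2$ matrix. A direct check shows that $\theta M=\mathrm{diag}(\e^{-\i\psi},\e^{\i\psi})$ and $M\theta=\overline{\theta M}$ are diagonal, while $\theta$ and $M\theta M$ are antidiagonal. Hence, with $a:=\cosh 2z$, $b:=\sinh 2z$, the expansion
\[
R_z^{*}\theta R_z=|a|^{2}\theta+\bar a b\,\theta M+\bar ba\,M\theta+|b|^{2}M\theta M
\]
splits cleanly into diagonal and off-diagonal contributions. Writing $z=x+\i y$ and using $|a|^{2}=\tfrac12(\cosh 4x+\cos 4y)$, $|b|^{2}=\tfrac12(\cosh 4x-\cos 4y)$, and $\bar ab=\tfrac12(\sinh 4x+\i\sin 4y)$, together with the criterion ``Hermitian $2\times 2$ matrix is $\le 0$ iff its trace is $\le 0$ and its determinant is $\ge 0$'', I would reduce the semidefiniteness condition to
\[
\Re z\ge 0\quad\text{and}\quad \cosh(4\Re z)\cos^{2}\psi+\cos(4\Im z)\sin^{2}\psi\ge 1.
\]

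Finally I would use $\cosh(4x)-1=2\sinh^{2}(2x)$ and $1-\cos(4y)=2\sin^{2}(2y)$ to rewrite the determinant inequality as $\sinh(2\Re z)\cos\psi\ge|\sin(2\Im z)\sin\psi|$, and the identity $\tanh(x+\i y)=\tfrac{\sinh 2x+\i\sin 2y}{\cosh 2x+\cos 2y}$ to recognise $|\tan\arg\tanh z|=|\sin 2y|/\sinh 2x$ when $\Re z>0$. The condition becomes $|\tan\arg\tanh z|\le\cot|\psi|$, which by monotonicity of $\tan$ on $[0,\pi/2)$ is $|\arg\tanh z|+|\psi|\le\pi/2$. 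The description of $\cA_\+(H_\psi)$ follows by replacing each ``$\le$'' with ``$<$''.

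The main obstacle will be the determinant computation. The cleanest route is the algebraic identity $|\gamma|^{2}=\alpha\beta-2\,\Re\gamma$, where $\alpha,\beta$ are the diagonal entries and $-\i\gamma$ is the $(1,2)$-entry of $R_z^{*}\theta R_z-\theta$; this collapses $\alpha\beta-|\gamma|^{2}$ to $2\,\Re\gamma$, turning the determinant inequality into the single scalar condition $\Re\gamma\ge 0$ displayed above.
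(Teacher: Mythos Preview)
Your direct verification of the $Sp_\+$ condition is correct and takes a genuinely different route from the paper. The paper does not compute $R_z^{*}\theta R_z-\theta$ at all; instead it applies the Cayley transform, obtaining
\[
A_{\psi,z}:=c(\e^{\i zD_\psi})\theta\ =\ (\text{const})\cdot\tanh z\,
\begin{bmatrix}\e^{-\i\psi}&0\\0&\e^{\i\psi}\end{bmatrix},
\]
and then invokes Theorem~\ref{bijections-th}(2) (the bijection $\Sym_\+^\qnd(\bbC^{2})\leftrightarrow Sp_\+(\bbC^{2})$) to replace the $Sp_\+$ condition by $\Re A_{\psi,z}>0$. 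Because $A_{\psi,z}$ is already diagonal, this is simply the pair of scalar inequalities $|\arg\tanh z+\psi|<\pi/2$ and $|\arg\tanh z-\psi|<\pi/2$, from which $|\arg\tanh z|+|\psi|<\pi/2$ drops out in one line. Thus the paper's argument cashes in the machinery built earlier and is essentially computation-free once $A_{\psi,z}$ is identified; your approach is more elementary and self-contained, making no use of Theorem~\ref{bijections-th}, at the price of the longer bare-hands $2\times2$ determinant computation and the trigonometric reduction at the end. Your key collapse $\alpha\beta-|\gamma|^2=2\Re\gamma$ does hold --- it is the relation $|a|^{2}-|b|^{2}=1$ in disguise.
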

   \begin{proof} 
       $\i D_\psi$ generates a holomorphic group in $Sp(\bbC^{2d})$, which can be computed using
        $D_\psi^2=-4\one$:
       \beq
       \e^{\i z D_\psi}=\begin{bmatrix}
       \cosh 2z&\i\e^{\i\psi}\sinh2 z\\-\i\e^{-\i\psi}\sinh2 z&\cosh2 z\end{bmatrix}
       \eeq
       Now
       \beq A_{\psi,z}=c(\e^{\i z D_\psi})\theta=2 \tanh z\begin{bmatrix}\e^{-\i\psi}&0 \\ 0 & \e^{\i\psi} \end{bmatrix}.
       \eeq

	Let us denote $t:= \arg \tanh z$. $A_{\psi,z}$ belongs to $\Sym_\+(\bbC^{2d})$    iff $\Re(z)>0$ and
	\begin{equation}
	\begin{cases}
	|t+\psi|<\frac{\pi}{2}, \\ |t-\psi|<\frac{\pi}{2}.
	\end{cases}
	\end{equation}
	The above pair of inequalities is equivalent to
	\begin{equation}
	|t|+|\psi|<\frac{\pi}{2}.
	\end{equation}
	
	By Theorem \ref{bijections-th}(2),
 $A_{\psi,z}\in\Sym_\+(\bbC^{2d})$   
        iff  $\e^{\i zD_\psi}\in Sp_\+(\bbC^{2d})$.

        The proof for $\cA_+(H_\psi)$ is analogous.
    \end{proof}

\end{document}